\DeclareMathOperator*{\argmax}{argmax}
\newcommand{\G}{\mathcal{G}} 
\newcommand{\Ps}{\mathcal{P}} 
\newcommand{\ps}{p} 
\newcommand{\Route}{\mathcal{R}}
\newcommand{\Number}{\mathcal{N}}
\newcommand{\Cluster}{\mathcal{C}}
\newcommand{\num}{n}
\newcommand{\rt}{r} 
\newcommand{\bs}{b} 
\newcommand{\Bs}{\mathcal{B}} 
\newcommand{\dt}{dt} 
\newcommand{\OPT}{\mathcal{O}} 
\newcommand{\Se}{\mathcal{S}} 
\newcommand{\Fr}{\mathcal{F}} 
\newcommand{\fr}{f}
\newcommand{\Problem}{FAST\xspace}
\newcommand{\smallVspace}{\vspace{-0.1in}}
\newcommand{\bigVspace}{\vspace{-0.2in}}
\newcommand{\fixint}{FixInterval\xspace}
\newcommand{\topk}{Top-$k$\xspace}
\newcommand{\greedy}{Greedy\xspace}
\newcommand{\partitiongreedy}{PartGreedy\xspace}
\newcommand{\partition}{BusRoutePartitioning\xspace}
\newcommand{\progressivegreedy}{ProGreedy\xspace}
\newcommand{\progressivepartgreedy}{ProPartGreedy\xspace}
\begin{document}
\title{Bus Frequency Optimization: When Waiting Time Matters in User Satisfaction}
\titlerunning{Bus Frequency Optimization}
% If the paper title is too long for the running head, you can set
% an abbreviated paper title here
% \author{Songsong Mo\inst{1}\orcidID{0000-1111-2222-3333} \and
\author{Songsong Mo\inst{1}\and
Zhifeng Bao\inst{2} \and
Baihua Zheng\inst{3} \and
Zhiyong Peng \inst{1}\thanks{Zhiyong Peng is the corresponding author.}}
%
%\authorrunning{F. Author et al.}
% First names are abbreviated in the running head.
% If there are more than two authors, 'et al.' is used.
%
\institute{School of Computer Science,Wuhan University, Hubei, China\\ \email{\{songsong945, peng\}@whu.edu.cn} \and
RMIT University, Melbourne, Australia\\ \email{zhifeng.bao@rmit.edu.au} \and
Singapore Management University, Singapore, Singapore\\ \email{bhzheng@smu.edu.sg}}
%\url{http://www.springer.com/gp/computer-science/lncs} \and
%ABC Institute, Rupert-Karls-University Heidelberg, Heidelberg, Germany\\
%
\maketitle              % typeset the header of the contribution
\smallVspace
\begin{abstract}

	Reorganizing bus frequency to cater for the actual travel demand can save the cost of the public transport system significantly. Many, if not all, existing studies formulate this as a bus frequency optimization problem which tries to minimize passengers' average waiting time. However, many investigations have confirmed that the user satisfaction drops faster as the waiting time increases. Consequently, this paper studies the bus frequency optimization problem considering the user satisfaction. Specifically, for the first time to our best knowledge, we study how to schedule the buses such that the total number of passengers who could receive their bus services within the waiting time threshold is maximized. We prove that this problem is NP-hard, and present an index-based algorithm with $(1-1/e)$ approximation ratio. By exploiting the locality property of routes in a bus network, we propose a partition-based greedy method which achieves a $(1-\rho)(1-1/e)$ approximation ratio. Then we propose a progressive partition-based greedy method to further improve the efficiency while achieving a $(1-\rho)(1-1/e-\varepsilon)$ approximation ratio. Experiments on a real city-wide bus dataset in Singapore verify the efficiency, effectiveness, and scalability of our methods.
	\smallVspace
	\keywords{bus frequency scheduling optimization  \and user waiting time minimization \and approximate algorithm.}
	\smallVspace
\end{abstract}

\vspace{-0.8cm}
\section{Introduction}
\smallVspace

Public transport and the services delivered by buses are essential to our daily life. Bus services provide us with the capability to move around, which shapes where we can work and live, where we shop and how we spend our leisure time. In this paper, we focus on bus frequency design which plays a very important role in urban public transport systems, as reorganizing bus frequencies to meet the actual travel demands is expected to achieve significant savings in cost. Taking New York City as an example, the cost of each bus is around \$550,000 and the operating cost of transit agencies reaches \$215 per hour\footnote{\url{https://www.liveabout.com/bus-cost-to-purchase-and-operate-2798845}}. If we re-organize the bus frequencies based on real travel demands and save 10\% bus departures, we can save \$20 operating costs per hour and \$55,000 per vehicle. 

%The public transport system is a huge public welfare project. U.S. department of transportation announces 266 million dollars to improve bus service nationwide in 2018\footnote{https://www.transit.dot.gov/about/news/us-department-transportation-announces-266-million-funding-opportunity-improve-bus}. Specifically, buses cost around 550,000 dollars per vehicle and the operating cost of transit agencies reaches 215 dollars per hour in New York City and 195 dollars per hour in San Francisco\footnote{https://www.liveabout.com/bus-cost-to-purchase-and-operate-2798845}. Moreover, bus frequency design plays a vital role in urban public transportation system. Significant savings in cost can be made by reorganization of bus frequency to suit the actual travel demand. For example, if we improve the service number of the bus system by 10\% by reorganizing the bus frequency, we can save about \$55,000 per vehicle and \$20 operating cost per hour. Therefore, how to improve the service effectiveness of the bus system by reorganization of bus frequency is a very important problem to our society and government.

In the literature, there are many studies focusing on the problem of bus frequency optimization. Most of them share a common objective, which is to 
minimize the \emph{average} travel cost (in terms of waiting time) of passengers~\cite{scheele1980supply,constantin1995optimizing,gao2004continuous,DBLP:journals/eor/MartinezMU14,8681337}. Moreover, their solutions are usually heuristic rather than approximate (with theoretical guarantees). However, most, if not all, existing works ignore an important aspect, the \emph{user satisfaction}. Many studies have confirmed that the user satisfaction drops faster as the waiting time increases~\cite{antonides2002consumer,kong2007correlates}. Motivated by this finding, we aim to schedule the buses in a way to serve more passengers within a given waiting time threshold $\theta$ but not to minimize the average waiting time. In addition, our algorithms are adaptive to cater for different settings of $\theta$.

%Against this backdrop, a lot of work studies bus frequency optimization problem which aims to minimize the average travel cost (waiting time) of passengers~\cite{scheele1980supply,constantin1995optimizing,gao2004continuous,yu2009genetic,DBLP:journals/eor/MartinezMU14,8681337}. They model the bus frequency optimization problem with an optimization problem or nonlinear model and propose some heuristic methods to solve these problems. However,  all of the above work ignore one important aspect, the user satisfaction. Some studies~\cite{kong2007correlates,antonides2002consumer} show that user satisfaction drops faster as the waiting time increases. Therefore, we try to serve more passengers within the given waiting time threshold instead of minimizing the average waiting time, which can improve the user satisfaction effectively. What's more, the waiting time threshold is tunable, hence the bus company can adjust thresholds to cater for various concerns on budget, government needs, passengers' tolerance of waiting, etc.

We call this novel problem as  Satis\underline{FA}ction-Boo\underline{ST} Bus Scheduling (\Problem). Given a bus database $\Bs$, a bus route database $\Route$, a passenger database $\Ps$, and a vector $\Number$ $\langle n_1, n_2, \cdots, n_i, \cdots, n_{|\Route|} \rangle$ that specifies the expected number of bus departures for each bus route, it chooses $n_i$ buses for each route $\rt_i \in \Route$ such that the whole bus system is able to satisfy the most passengers. The analysis shows that the objective function of \Problem is submodular and \Problem is NP-hard. %(see Section~\ref{sec_3}). 
%Because of the NP-hard of \Problem, we resort to develop approximate algorithms to solve it efficiently.

%As a result, this paper first studies the bus scheduling coordination problem considering user satisfaction and introduces a novel problem of \underline{B}us \underline{S}cheduling \underline{C}oordination with user \underline{S}atisfaction (\Problem). Given a bus database $\Bs$, a bus route database $\Route$, a passenger database $\Ps$, and a vector $\Number$ $\langle n_1, n_2, \cdots, n_i, \cdots, n_{|\Route|} \rangle$, it chooses $n_i$ buses for each $\rt_i \in \Route$ such that the whole bus system can satisfy the most passengers. The analysis shows that the object function of \Problem is submodular and NP-hard (see Section~\ref{sec_3}). Due to the NP-hard of \Problem, we resort to develop approximate algorithms to solve it efficiently.

To resolve the \Problem problem, we develop a range of approximate algorithms with non-trivial theoretical guarantees. First, we propose an index-based greedy method (\greedy), which can provide $(1-1/e)$ approximation factor for \Problem as the baseline, and two enhanced versions, namely \partitiongreedy and \progressivepartgreedy. \partitiongreedy is inspired from \cite{DBLP:conf/kdd/ZhangBLLZP18} and by the fact that a bus network is designed to cover different parts of the city and it tries to avoid unnecessary overlapping among routes~\cite{fletterman2009designing,DBLP:journals/tkde/WangBCSC18}. It adopts a partitioning algorithm to divide the bus network into several disjoint partitions. Accordingly, it invokes local greedy search within each partition, which effectively reduces the computation cost of the original greedy algorithm. On the other hand,  \progressivepartgreedy adopts a different strategy to address the efficiency issue. Instead of finding one bus that contributes the most to the objective function in each iteration of the local greedy search, it fetches multiple buses in each iteration of the local greedy search to cut down the total number of iterations required. Meanwhile, \progressivepartgreedy has a tunable parameter that could determine roughly how many buses could be fetched in each iteration and hence provide a trade-off between efficiency and effectiveness.
%
%It reduces the computation of the original greedy algorithm because getting local solutions for each cluster only needs to scan the buses and passengers corresponding to the routes in the cluster during the greedy search. To further improve the efficiency of our methods, we devise a progressive partition based method which can provide a trade-off between efficiency and effectiveness.

In summary, we make the following contributions.
\begin{itemize}\smallVspace
	\item We propose and study the \Problem problem. To the best of our knowledge, this is the first study on bus frequency optimization that considers user satisfaction. We prove  that the objective function of \Problem is  monotone and submodular, and \Problem is NP-hard.
	\item We propose an index-based greedy method (\greedy), a partition-based greedy method (\partitiongreedy) and a progressive partition-based greedy method (\progressivepartgreedy) to solve the \Problem problem efficiently. They can achieve an approximation ratio of $(1 - \frac{1}{e})$, $(1-\rho)(1 -  \frac{1}{e})$, and $(1-\rho)(1- \frac{1}{e}-\varepsilon)$ respectively, where $\rho$ and $\varepsilon$ are the user-defined parameters. %ranging between 0 and 1.
	%	\item We propose a progressive partition based greedy method to further improve the efficiency of our methods. It achieves a $(1-\rho)(1-1/e-\varepsilon)$  approximation ratio to \Problem.
	\item We conduct extensive experiments on real-world bus route and bus touch-on/touch-off records in Singapore (396 routes, 28 million trip records of one week) to demonstrate the effectiveness, efficiency and scalability of our methods.
	\smallVspace
\end{itemize}

%It is worth noting that $\rho$ and $\varepsilon$ are the user defined parameters to balance between efficiency and effectiveness, ranging between 0 and 1. %The rest of this paper is organized as follows. In Section~\ref{related_work}, we discuss the most relevant literature to this paper. In Section~\ref{sec_3}, we formalize the \Problem problem and show the . In Section~\ref{framework}, we analyze the monotonicity and submodularity of the objective function of \Problem and give an overview of our solutions. Section~\ref{sec:matrix} introduces a matrix based solution to reduce the time cost. To further accelerate our method, a bound based method is presented in Section~\ref{sec:pruning}. Section~\ref{sec:exp} provides the experiments that compare the performance of our solutions with baselines. Finally, we conclude our work in Section~\ref{sec:conclusion}.

\smallVspace
\section{Related Work} \label{related_work}
\smallVspace

In this section, we will review existing related work and report the difference between this work and existing ones.

We divide the literature into two categories based on the overall optimization objective. One is called the travel time driven bus frequency optimization problem (Travel-BFO), which aims to minimize the average/total travel time of passengers for either one bus route or a bus route network, based on passenger demands. It treats each ride as a new trip. Another is called the transfer time driven bus frequency optimization problem (Transfer-BFO), which aims to minimize the total transfer time of the transfer passengers.

\noindent
\textbf{Travel-BFO.} 
%Most studies on travel time-driven bus frequency optimization aim to minimize the average/total travel cost of passengers for either one bus route or a bus route network, based on passenger demands. 
Here, the passenger demands are usually abstracted as origin-destination (OD) pairs.  
The model proposed in \cite{scheele1980supply} treats the travel time of passengers as an aggregation of the walking time, the waiting time, and the on-board travel time. %The behavior of the users is implicitly embedded into the model: given an OD pair, its demand is divided among different lines according to the entropy of the distribution of journeys over the trip matrix and a bus capacity constraint. 
The problem is usually formulated as a nonconvex objective function with linear or convex constraints. %and a heuristic method is proposed by refining a set of frequencies according to a descent strategy.  
In~\cite{constantin1995optimizing}, it is modeled as a nonlinear bilevel problem: the upper level represents the planner who wants to ensure minimal total travel time under fleet size constraints; the lower level represents the users who act by minimizing the travel time. % Therefore, the objective functions of both levels have the same expression. A heuristic approach is proposed based on a gradient descent that exploits specific properties of the problem.
In~\cite{gao2004continuous}, a multi-objective model is proposed, seeking to minimize the overall travel time of the users and the operational cost of the operators (assumed to be linearly proportional to the frequencies). %The salient characteristic of this work is the internalization of the congestion in the behavior of the users. The proposed approximate solution starts with an initial set of frequencies, which is successively improved by a sensitivity analysis procedure.
%
%Yu et al.~\cite{yu2009genetic} propose a genetic algorithm for bus frequency optimization, to minimize the on-board time and waiting time subject to a fleet size constraint. %By using an integer encoding of frequencies and genetic operators which redistribute the available fleet among the different lines of the system, an heuristic solution is proposed. 
%
Mart{\'{\i}}nez et al.~\cite{DBLP:journals/eor/MartinezMU14} study the transit frequency optimization problem to determine the time interval between subsequent buses for a set of bus lines. They propose a mixed-integer linear programming (MILP) formulation for an existing bilevel model~\cite{constantin1995optimizing}, and present a metaheuristic method.
%The proposed formulation can solve to optimality real small-sized instances of the problem using MILP techniques. For solving larger instances, they propose a metaheuristic whose accuracy is estimated by comparing against exact results.
%
A new model considering user behavior is proposed in~\cite{8681337}. It assigns a user's trip to three stages (pre-trip, on-board and end-trip) 
%and considers four different user behaviors when building the model process, including the shortest total travel time, the shortest pre-trip time, the shortest walking time, and the shortest riding time. 
and aims to minimize users’ total travel costs of the objective bus line. %However, our work considers users’ behaviors affected by mobile bus applications and aims to obtain the optimal frequency of the objective bus line to serve as many users as possible.  
%
%The purpose of this paper, by considering users’ behavior affected by mobile bus applications, was mainly aiming to obtain the optimal frequency of the objective bus line. 

\noindent\underline{\textit{Differences}}. Although different bus frequency optimization models have been proposed, they share a very similar optimization objective, i.e., minimizing the average/total travel cost of passengers. Different from the above literature, we aim to improve the \emph{overall passenger satisfaction} by scheduling the buses such that they can serve more passengers within the given waiting time threshold. 
Our work is mainly motivated by the following two findings. \emph{First}, waiting time has a direct impact on the user satisfaction, as evident by many studies~\cite{kong2007correlates,antonides2002consumer}. \emph{Second}, the waiting time threshold is tunable, hence the bus company can adjust thresholds to cater to various concerns on budget, government needs, passengers' tolerance of waiting, etc. 
%
%Overall, although different bus frequency optimization models had been proposed in the literature, they had very similar optimization objectives, i.e. the average/total travel cost of passengers. Different from the above work, we try to serve more passengers within the given waiting time threshold instead of minimizing the average/total waiting time. There are two reasons as follows. Firstly, some studies~\cite{kong2007correlates,antonides2002consumer} show that user satisfaction drops faster as the waiting time increases. Then our model can improve overall passenger satisfaction more effectively. \song{to find more evidence} Secondly, the waiting time threshold is tunable. Therefore, the bus company can adjust thresholds based on budget or government needs.
%But some studies~\cite{kong2007correlates,antonides2002consumer} show that user satisfaction drops faster as the waiting time increases. Therefore, we try to serve more passengers within the given waiting time threshold instead of minimizing the average/total waiting time. This will improve overall passenger satisfaction more effectively.

\vspace{.2em}
\noindent
\textbf{Transfer-BFO.} Transfer time driven bus frequency optimization problem is an extension of single bus route timetabling. It determines the departure time of each trip of all lines in the bus network with the consideration of passenger transfer activities at transfer stations~\cite{ibarra2015planning}. %The BFC problem is modelled by mixed integer programming (MIP) models~\cite{ceder2001creating,ibarra2012synchronization,shafahi2010practical,parbo2014user,wu2015equity,wu2019combining}. Some of them~\cite{ceder2001creating,shafahi2010practical,parbo2014user,wu2015equity}  design a heuristic algorithm, while others~\cite{ibarra2012synchronization} use local search methods. Wu~et~al.~\cite{wu2019combining} combine local search into a genetic algorithm for the BSC problem. Please refer to~\cite{ibarra2015planning} for more details.

This problem is modeled by mixed integer programming models to maximize the number of synchronized bus arrivals at transfer nodes~\cite{ceder2001creating}. 
Ibarra-Rojas~et~al.~\cite{ibarra2012synchronization} extend
the work of Ceder et al.~\cite{ceder2001creating} to address a flexible Transfer-BFO problem with almost evenly spaced departures and preventing bus bunching. The model proposed in \cite{shafahi2010practical} tries to minimize the total transfer time experienced by passengers. Parbo et al.~\cite{parbo2014user} studied a bi-level bus timetabling problem to minimize the weighted transfer waiting time of passengers, and a Tabu Search algorithm was applied to solve the bilevel model. Recently a nonlinear mixed integer-programming model is proposed to maximize the number of total transferring passengers with small excess transfer time~\cite{wu2019combining}.

\noindent\underline{\textit{Differences}}. The above studies on the Transfer-BFO problem mainly focus on minimizing the total transfer cost for passengers on transfer, which can only improve the satisfaction of the transfer passengers. In contrast, our problem aims to improve overall passenger satisfaction by serving them within a given time threshold.

For all the above work in both categories, despite the difference, all existing approaches only propose heuristic methods without theoretical guarantees, while we propose algorithms with non-trivial theoretical guarantees.

\vspace{-0.2cm}
\section{Problem Formulation} \label{sec_3}
\smallVspace

In a bus route database $\Route$, a route $\rt$ is a sequence of bus stations ($s_1$, $s_2$, $\cdots$, $s_i$, $\cdots$, $s_m$), where $s_i$ is a bus station represented by (latitude, longitude). In a passenger database $\Ps$, a passenger $\ps \in \Ps$ is in form of a tuple $\{s_b,s_e,t\}$, where $s_b$ denotes the boarding station, $s_e$ denotes the alighting station, and $t$ denotes the time when $\ps$ reaches $s_b$. A bus $\bs_{ij}$ is in form of a tuple $\{\rt_i,\dt_j\}$, where $\rt_i$ and $\dt_j$ denote the bus service route and the departure time from $\rt_i.s_1$ respectively.

\begin{definition}\label{defn:serve}
	We define that a bus $\bs_{ij}$ can serve a passenger $\ps$, if $\rt_i$ contains $\ps.s_b$ and $\ps.s_e$ in order, and $0\le \dt_j+T(\rt_i.s_1,\ps.s_b)-t \le \theta$, where $T(\rt_i.s_1,\ps.s_b)$ denotes the travel time required by bus $\bs_{ij}$ from $\rt_i.s_1$ to $\ps.s_b$  via the bus route $\rt_i$, and $\theta$ is a given waiting time threshold.
\end{definition}

There are multiple ways available to approximate $T(\rt_i.s_1,\ps.s_b)$. In this paper, we utilize the historical average travel time from $\rt_i.s_1$ to $\ps.s_b$ via the route $\rt_i$ to compute $T(s_1,s_b)$.
%Here we use the historical average travel time of $s_1$ and $s_b$ to compute $T(s_1,s_b)$. 
Based on Definition~\ref{defn:serve}, we formally introduce $\Se(\bs_{ij},\ps_k)$ to  denote the service of $\bs_{ij}$ to $\ps_k$, as presented in Equation~(\ref{equ:Se}). 
\begin{equation}\label{equ:Se}
\Se(\bs_{ij},\ps_k) = \left\{ \begin{array}{l}
1  	{\textrm{ if $\bs_{ij}$ can serve $p_k$}}\\
0	{\textrm{ otherwise}}
\end{array} \right. 
\end{equation}

Next, we introduce the concept of bus service frequency in Definition~\ref{defn:frequency}. Let the bus service frequency $\Fr$ for $\Route$ be a set, with each element $\fr_i \in \Fr$ corresponding to a bus route $\rt_i \in \Route$, i.e., $\Fr=\{\cup_{\forall \rt_i \in \Route}\fr_i$\}. 
Then, the service of $\Fr$ to a passenger $\ps_k$ can be computed by Equation~(\ref{equ:Se_F}). Note $\Se(\Fr,\ps_k) = 1$ as long as any $\bs_{ij} \in \Fr$ can serve $p_k$; otherwise, $\Se(\Fr,\ps_k) = 0$.
\begin{equation}\label{equ:Se_F}
\Se(\Fr,\ps_k) = 1 - \prod\nolimits_{{\bs_{ij}} \in \Fr} {(1 - \Se(\bs_{ij},\ps_k))} 
\end{equation}

\begin{definition}\label{defn:frequency}
	A bus service frequency ($\fr_i$) for $\rt_i$ refers to a set of buses ($\bs_{i1}$, $\bs_{i2}$, $\cdots$, $\bs_{in_i}$) that serve the route $\rt_i$, where $n_i$ $(n_i\ge1)$ denotes the total number of bus departures corresponding to the route $\rt_i$ within a day.
\end{definition}

Next, we formulate our problem in Definition~\ref{defn:problem} and show its NP-hardness. Note that we ignore the passenger capacity of the bus in our problem definition. 
\iffalse
It is guaranteed that $\G$ is monotone and submodular, as stated in Theorem~\ref{theorem:G}. In addition, the problem defined in Definition~\ref{defn:problem} is NP-hard, as stated in Theorem~\ref{np}.
\fi  
\smallVspace
\begin{definition}[Satis\underline{FA}ction-Boo\underline{ST} Bus Scheduling (\Problem)]\label{defn:problem}
	Given a bus route database $\Route$, a passenger database $\Ps$, a waiting time threshold $\theta$, and a vector $\Number \langle n_1$, $n_2$, $\cdots$,$n_i$, $\cdots$, $n_{|\Route|}\rangle$ where $n_i$ $(\geq 1)$ denotes the total number of bus departures of bus route $\rt_i\in \Route$, we output a bus service frequency $\Fr$ which can maximize $\G(\Fr) = \sum\nolimits_{{\ps_k} \in \Ps} \Se(\Fr,\ps_k)$, 
	%	\begin{equation*}
	%	\G(\Fr) = \sum\nolimits_{{\ps_k} \in \Ps} \Se(\Fr,\ps_k),
	%	\end{equation*}
	where $\G(\Fr)$ denotes the total number of passengers served by $\Fr$.
\end{definition}

\begin{theorem}\label{theorem:G}
	The objective function $\G$ of \Problem is monotone and submodular.
\end{theorem}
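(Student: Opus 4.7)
The plan is to establish the two properties pointwise, i.e. show that for each passenger $p_k \in \Ps$ the single-passenger service function $\Se(\cdot, p_k)$ is monotone and submodular as a set function on buses, and then observe that $\G$ inherits both properties since it is a nonnegative sum of such functions.

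For monotonicity, I would fix a passenger $p_k$ and a bus set $\Fr$ and a new bus $b \notin \Fr$. From Equation~(\ref{equ:Se_F}), $\Se(\Fr, p_k) = 1 - \prod_{b_{ij} \in \Fr}(1 - \Se(b_{ij}, p_k))$, where each factor lies in $\{0,1\}$, hence in $[0,1]$. Adjoining $b$ multiplies the product by an extra factor $(1 - \Se(b, p_k)) \in [0,1]$, which cannot increase the product and hence cannot decrease $\Se(\Fr \cup \{b\}, p_k)$. Summing over $p_k \in \Ps$ yields $\G(\Fr \cup \{b\}) \ge \G(\Fr)$.

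For submodularity, fix $A \subseteq B$ with $b \notin B$ and a passenger $p_k$; I want to show $\Se(A \cup \{b\}, p_k) - \Se(A, p_k) \ge \Se(B \cup \{b\}, p_k) - \Se(B, p_k)$. I would split on the value of $\Se(b, p_k)$. If $\Se(b, p_k) = 0$, then by Equation~(\ref{equ:Se_F}) adding $b$ changes nothing on either side, so both marginals are $0$. If $\Se(b, p_k) = 1$, then $\Se(A \cup \{b\}, p_k) = \Se(B \cup \{b\}, p_k) = 1$, so the two marginals equal $1 - \Se(A, p_k)$ and $1 - \Se(B, p_k)$ respectively; by the pointwise monotonicity just proved, $\Se(A, p_k) \le \Se(B, p_k)$, hence the marginal at $A$ dominates that at $B$. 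Summing the pointwise inequality over all passengers gives the submodularity of $\G$.

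I do not expect any real obstacle here: both properties reduce to the observation that $\Se(\Fr, p_k)$ is the standard ``coverage'' indicator $\mathbf{1}[\exists b_{ij} \in \Fr : \Se(b_{ij}, p_k) = 1]$, and coverage functions are the textbook example of monotone submodular set functions. The only care needed is to handle the product form in Equation~(\ref{equ:Se_F}) cleanly via the two-case split above, rather than trying to manipulate the products algebraically.
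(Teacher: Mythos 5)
Your proof is correct and follows essentially the same route as the paper: both decompose $\G$ as a sum over passengers, reduce to the pointwise inequality $\Se(A\cup\{b\},p_k)-\Se(A,p_k)\ge \Se(B\cup\{b\},p_k)-\Se(B,p_k)$, and settle it by a case analysis on the $0/1$ indicator values. Your two-case split on $\Se(b,p_k)$, combined with pointwise monotonicity, is a slightly more economical organization than the paper's four cases (keyed on whether $p_k$ is served by $V$, by $T\setminus V$, or by $b$), but the underlying argument is the same coverage-function observation.
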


\begin{proof}
	%We do not prove the monotonicity of $\G$ since it is very straightforward. It remains to show that $\G$ is submodular.
	We skip the proof of the monotonicity of $\G$ as it is straightforward. In the following, we prove that $\G$ is submodular.
	Let $V \subseteq T\subset \mathcal{B}$, where $\mathcal{B}$ denotes the universe of buses, and $b$ refers to a bus in $\mathcal{B}\backslash T$. According to \cite{DBLP:journals/mp/NemhauserWF78}, $\G(V)$ is submodular if it satisfies: $\G(V  \cup b)-\G(V) \geq \G(T \cup b)-\G(T)$. To facilitate the proof, we define $V_b=V \cup b$ and $\G_b(V)=\G(V \cup b)-\G(V)$. Then, we have:
	\begin{equation}\label{sub:0}
	\begin{aligned}
	\G_b(V)- \G_b(T)&= (\sum\nolimits_{{\ps_k} \in \Ps} \Se(V_b,\ps_k) - \sum\nolimits_{{\ps_k} \in \Ps} \Se(V,\ps_k)) \\
	- &(\sum\nolimits_{{\ps_k} \in \Ps} \Se(T_b,\ps_k) - \sum\nolimits_{{\ps_k} \in \Ps} \Se(T,\ps_k))\\
	&=\sum\nolimits_{{\ps_k} \in \Ps} (\Se(V_b,\ps_k) - \Se(V,\ps_k)-\Se(T_b,\ps_k)+\Se(T,\ps_k)).\\
	\end{aligned}
	\end{equation}
	To show the submodularity of $\G$, we first prove Inequality~(\ref{sub:1}).
	\begin{equation}\label{sub:1}
	\Se(V_b,\ps_k) - \Se(V,\ps_k)-\Se(T_b,\ps_k)+\Se(T,\ps_k)\geq 0
	\end{equation}
	According to whether $\ps_k$ can be served by buses in $V$ or buses in $T\backslash V$ or bus $b$, there are in total four cases corresponding to Inequality~(\ref{sub:1}).
	%In the following, we enumerate all four cases to show its correctness. 
	\underline{Case 1}: $\ps_k$ can be served by a bus $b_0 \in V$. Then we have $\Se(V,\ps_k)=\Se(V_b,\ps_k)=\Se(T,\ps_k)=\Se(T_b,\ps_k)=1$, because $V \subset V_b$ and $V \subseteq T \subset T_b$. Thus, $\Se(V_b,\ps_k) - \Se(V,\ps_k)-\Se(T_b,\ps_k)+\Se(T,\ps_k)=0$. 
	\underline{Case 2}: $\ps_k$ cannot be served by any bus $b_0 \in V$ but it can be served by a bus $b_1 \in T\backslash V$. Then we have $\Se(V,\ps_k) =0$, $\Se(V_b,\ps_k)\geq0$ and $\Se(T,\ps_k)=\Se(T_b,\ps_k)=1$. Thus, $\Se(V_b,\ps_k) - \Se(V,\ps_k)-\Se(T_b,\ps_k)+\Se(T,\ps_k)\geq0$. \underline{Case 3}: $\ps_k$ cannot be served by any bus $b_0 \in T$ and can be served by the bus $b$. Then we have $\Se(V,\ps_k)=\Se(T,\ps_k)=0$ and $\Se(V_b,\ps_k)=\Se(T_b,\ps_k)=1$. Thus, $\Se(V_b,\ps_k) - \Se(V,\ps_k)-\Se(T_b,\ps_k)+\Se(T,\ps_k)=0$. 
	\underline{Case 4}: $\ps_k$ cannot be served by any bus $b_0 \in T$ or the bus $b$. Then we have $\Se(V,\ps_k)$=$\Se(V_b,\ps_k)$=$\Se(T,\ps_k)=\Se(T_b,\ps_k)=0$. Thus, $\Se(V_b,\ps_k) - \Se(V,\ps_k)-\Se(T_b,\ps_k)+\Se(T,\ps_k)=0$. The above shows the correctness of Inequality~(\ref{sub:1}).
	Based on Equation~(\ref{sub:0}) and Inequality~(\ref{sub:1}), we have $\G_b(V)- \G_b(T)\geq 0$ and hence $\G$ is a submodular function. $\blacksquare$
\end{proof}

%Next, we prove the NP-hard of \Problem.
\smallVspace
\begin{theorem} \label{np}
	The \Problem problem is NP-hard.
\end{theorem}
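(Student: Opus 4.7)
The plan is to establish NP-hardness of \Problem by a polynomial-time reduction from \textsc{MAX-2SAT}, which is well known to be NP-hard.

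Given a \textsc{MAX-2SAT} instance with variables $x_1,\ldots,x_n$ and two-literal clauses $c_1,\ldots,c_m$, I would first build the route side: for each variable $x_i$ introduce a route $\rt_i$ whose candidate departure-time pool consists of exactly two times $\dt_i^T$ and $\dt_i^F$ separated by strictly more than $\theta$, and set $n_i=1$. Picking the bus at $\dt_i^T$ is interpreted as assigning $x_i\leftarrow\mathrm{True}$ and $\dt_i^F$ as $x_i\leftarrow\mathrm{False}$. I would then build the passenger side: for each clause $c_k=\ell_a\vee\ell_b$ introduce a passenger $\ps_k$ with a dedicated consecutive pair of stations $(s_b^k,s_e^k)$ that appear in order only on routes $\rt_a$ and $\rt_b$, so that only buses on these two routes are candidates to serve $\ps_k$. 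Finally, I would choose $T(\rt_i.s_1,s_b^k)$ for $i\in\{a,b\}$ and $\ps_k.t$ so that, on route $\rt_i$, the serving window of $\ps_k$ contains $\dt_i^T$ iff $x_i$ appears positively in $c_k$ and contains $\dt_i^F$ iff $\neg x_i$ appears in $c_k$. Because $|\dt_i^T-\dt_i^F|>\theta$, any length-$\theta$ window contains at most one of them, so these requirements are self-consistent.

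Any feasible $\Fr$ in the constructed instance picks exactly one of $\{\dt_i^T,\dt_i^F\}$ per route, hence corresponds to a truth assignment $\sigma$. By the design of the serving windows, $\ps_k$ is served by $\Fr$ iff at least one literal of $c_k$ is true under $\sigma$, i.e., iff $\sigma$ satisfies $c_k$. Therefore $\G(\Fr)$ equals the number of clauses satisfied by $\sigma$, so any exact solver for \Problem yields an exact solver for \textsc{MAX-2SAT}; since the reduction is clearly polynomial, \Problem is NP-hard.

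The step I expect to be the main obstacle is the concrete realization of the bus network: one must lay out stations and inter-station travel times so that each dedicated pair $(s_b^k,s_e^k)$ appears in order on exactly the two intended routes and nowhere else, and so that the travel time from $\rt_i.s_1$ to $s_b^k$ takes the prescribed value for every clause $c_k$ simultaneously. I would resolve this by giving every passenger its own private pair of stations on each relevant route, placed along the route at a position whose distance from $\rt_i.s_1$ is chosen independently; this decouples the per-$(i,k)$ travel-time constraints and keeps the whole construction polynomial in size.
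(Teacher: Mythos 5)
Your reduction is correct in outline but takes a genuinely different route from the paper's. The paper reduces from \textsc{Set Cover}: it maps each element of the universe to a passenger and each subset to the set of passengers served by one candidate bus, sets $\num=k$, and asks whether $k$ buses serve all passengers. Your variable gadget (one route per variable, two departure times more than $\theta$ apart, $n_i=1$) and clause gadget (one passenger per clause whose boarding/alighting pair lies in order on exactly the two routes of its literals) appear nowhere in the paper. What the paper's approach buys is brevity; what yours buys is explicit realizability. Indeed, the paper never verifies that an arbitrary set system can be realized by actual routes, stations, travel times and length-$\theta$ windows --- the sets of passengers servable by buses on a common route carry an interval structure in time, so not every set system is realizable --- whereas your per-clause private station pairs and independently chosen travel times $T(\rt_i.s_1,s_b^k)$ address exactly this point. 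Your final observation that the station order along each route can be made consistent with the prescribed travel times (by ordering stations by increasing $T$ and choosing each $\ps_k.t$ large enough that all required travel times are positive) closes the construction.

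One step needs more care than you give it. You restrict the candidate departure-time pool of $\rt_i$ to exactly $\{dt_i^T, dt_i^F\}$. Definition~\ref{defn:problem} does not take a candidate bus set as input, and the paper's own proof treats $\Bs$ as the set of \emph{all} second-granularity departure times. Under that reading your claim ``any feasible $\Fr$ picks exactly one of $\{dt_i^T,dt_i^F\}$ per route'' fails, and worse, an intermediate departure time can lie simultaneously in the window of a positive-occurrence passenger and that of a negative-occurrence passenger (e.g.\ with $dt_i^T=0$, $dt_i^F=\theta+1$, windows $[0,\theta]$ and $[1,1+\theta]$ overlap on $[1,\theta]$), so a single bus could ``satisfy'' clauses via contradictory assignments and push the \Problem{} optimum above the \textsc{Max-2Sat} optimum. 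The fix is easy and should be stated: for each route $\rt_i$ choose the travel times so that \emph{all} positive-occurrence passengers of $x_i$ share the identical window $[dt_i^T, dt_i^T+\theta]$ and all negative-occurrence passengers share $[dt_i^F, dt_i^F+\theta]$, with $|dt_i^T-dt_i^F|>\theta$ making the two disjoint; then every departure time serves either a full ``true'' bundle, a full ``false'' bundle, or nothing, and the correspondence with truth assignments is restored regardless of how $\Bs$ is interpreted. With that amendment your argument is a valid, and in some respects more rigorous, alternative to the paper's proof.
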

\smallVspace
\begin{proof}
	It is worth noting that the minimum unit of time is second in daily life. Therefore, $\Bs$ is a finite set. Based on this, we prove it by reducing the Set Cover problem to the \Problem problem. In the Set Cover problem, given a collection of subsets $S_1$, $\cdots$, $S_i$, $\cdots$, $S_j$ of a universe of elements $U$, we wish to know whether there exist $k$ of the subsets whose union is equal to $U$. We map each element in $U$ in the Set Cover problem to each passenger in $\Ps$, and map each subset $S_i$ to the set of passengers server by a bus $b \in \Bs$. Consequently, if all passengers in $U$ are served by $S$, the total number of passengers served by $S$ is $|U|$. Subsequently, $\num = \sum_{i = 1}^{|\Route|} {{n_i}}$ is set to $k$ (selecting $k$ buses). The Set Cover problem is equivalent to deciding if there is a $k$-bus set with the maximum served passenger number $U$ in \Problem. As the Set Cover problem is NP-complete, the decision problem of \Problem is NP-complete, and the optimization problem is NP-hard. $\blacksquare$
\end{proof}

\setlength{\algomargin}{1.2em} 
\begin{algorithm}[t]
	\caption{\greedy$(\Bs, \Route, \Ps, \Number)$}
	\label{greedy}
	\begin{small}
		{\bf Input:} a bus database $\Bs$, a bus route database $\Route$, a passenger database $\Ps$, and a vector $\Number$ $\langle n_1, n_2, \cdots, n_{|\Route|} \rangle$
		
		{\bf Output:} a bus service frequency $\Fr$
		
		Initialize $\Fr \gets \phi$, $\num\gets \sum_{i = 1}^{|\Number|} {{n_i}}$
		
		Initialize a $|\Number|$-dimension vector $\langle k_1, k_2, \cdots, k_{|\Number|}\rangle$ with zero
		
		%Initialize $\num\gets \sum\limits_{i = 1}^{|\Number|} {{n_i}}$
		
		\For{$i \gets 1$ to $\num$}
		{	
			Select a bus $\bs_{jl} \gets  \mathop {\arg \max }_{\bs \in \Bs \backslash\Fr}(\G(\Fr \cup \bs)- \G(\Fr))$ \label{marginal}
			%
			%Let $j$ be the bus route served by bus $\bs$ \label{lineCheck:start}
			
			$k_j++$	
			
			\If{$k_j \le n_j$}
			{\label{lineCheck:start}
				$\Fr \gets \Fr \cup \bs_{jl} $	
				
			}
			\If{$k_j \ge n_j$}
			{
				remove all the buses serving the route $j$ from $\Bs$
			}
			\label{lineCheck:end}
			%	\If{$\bs$ serves the route j} 
			%    {\label{lineCheck:start}
			%		
			%		$k_j++$
			%		
			%		\If{$k_j\geq n_j$}
			%		{
			%			remove all the buses serving the route $j$ from $\Bs$
			%		}
			%
			%	}  \label{lineCheck:end}	
		}
		\Return {$\Fr$}
	\end{small}
\end{algorithm}
\bigVspace

\smallVspace
\section{Basic Greedy Method} \label{sec_4}
\smallVspace

To address \Problem, we first present a baseline which extends the basic greedy method for the problem of submodular function maximization. To accelerate the marginal gain computation, we propose a mapping structure to index the bus and passenger database. The basic greedy method is guaranteed to achieve (1 - 1/$e$)-approximation, as proved by Nemhauser et al.~\cite{DBLP:journals/mp/NemhauserWF78}.

\smallVspace
\subsection{A Basic Greedy Method}
The pseudo-code of the greedy method is listed in Algorithm~\ref{greedy}. In each iteration, it selects a bus $\bs_{jl} \in \Bs \backslash\Fr$ with the largest marginal gain, such that $\bs_{jl}=\mathop {\arg \max }_{\bs \in \Bs \backslash\Fr}(\G(\Fr \cup \bs)- \G(\Fr))$, and inserts it to the current service frequency $\Fr$. 
%\bao{can u move $\bs_{jl} \in \Bs \backslash\Fr$ in the right lower corner of "max"?} 
In lines~\ref{lineCheck:start}-\ref{lineCheck:end}, it checks whether the number of bus departures of route $j$, which $\bs_{jl}$ serves, has reached the total number of bus departures required by this route. If so, it removes all buses serving the route $j$ from $\Bs$. Such an iteration is repeated $\num$ times, with $\num$ being the total number of bus departures required by all the bus routes. Finally, it returns $\Fr$ as the solution. % when the cardinality of $\Fr$ is equal to $\num$.\\

\noindent
\textbf{Time Complexity.} In each iteration, Algorithm~\ref{greedy} needs to scan all the buses in $\Bs \backslash\Fr$ and computes their marginal gain to the chosen set. Each marginal gain computation
needs to traverse $\Ps$ once in the worst case. Thus, adding one bus into $\Fr$ takes $O(|\Ps|\cdot|\Bs|)$ time, and the total complexity is $O(\num\cdot |\Ps|\cdot|\Bs|)$.

%\begin{figure*}[!h]\vspace{-1em}
%	\centering
%	\includegraphics[clip,width=1\textwidth]{fig/Index.png}
%	\vspace{-20pt}
%	\caption{An index for efficient marginal gain update\song{repaint}} \label{index}
%	\vspace{-2pt}
%\end{figure*}

\begin{figure}[t]
	\hspace{0.1in}
	\begin{minipage}[b]{0.15\textheight}%{0.47\textwidth}
		\centering
		\begin{tabular}{|c||c|c|}
			\hline
			Bus List & $N_{ToBeServed}$ & $L_{P}$ \\ \hline
			$b_1$ & $3$ & $p_1,p_3, p_{|\Ps|}$ \\ \hline
			$b_2$ & $2$ & $p_1,p_2$ \\ \hline
			$b_3$ & $1$ & $p_3$ \\ \hline
			$\cdots$ & $\cdots$ & $\cdots$ \\ \hline
			$b_{|\Bs|}$ & $1$ & $p_2$ \\ \hline
		\end{tabular}
		\smallVspace
		\caption{Forward list}
		\label{fig:forward_list}
	\end{minipage}
	\hspace{0.8in}
	\begin{minipage}[b]{0.15\textheight}%{0.47\textwidth}
		\centering
		\begin{tabular}{|c||c|c|}
			\hline
			Passenger List & $IsServed$ & Optional Buses \\ \hline
			$p_1$ & $false$ & $b_1,b_2$ \\ \hline
			$p_2$ & $false$ & $b_2,b_{|\Bs|}$ \\ \hline
			$p_3$ & $false$ & $b_1,b_3$ \\ \hline
			$\cdots$ & $\cdots$ & $\cdots$ \\ \hline
			$p_{|\Ps|}$ & $false$ & $b_1$ \\ \hline
		\end{tabular}
		\smallVspace
		\caption{Inverted list}
		\label{fig:inverted_list}
	\end{minipage}
	\bigVspace
\end{figure}

\smallVspace
\subsection{Index for Efficient Marginal Gain Computation}
\iffalse
As we analyzed in Time Complexity of \greedy, Algorithm~\ref{greedy} needs to scan all the buses in $\Bs \backslash\Fr$ and traverse $\Ps$ to compute the marginal gain, which incurs a high computation cost.
\fi 

To accelerate the marginal gain computation, which is the main bottleneck of Algorithm~\ref{greedy}, we propose two mapping indexes, \emph{forward list} and \emph{inverted list} as shown in Fig.~\ref{fig:forward_list} and Fig.~\ref{fig:inverted_list} respectively. The former is for buses $\bs_i \in \Bs$, maintaining a list of passengers $L_{P}$ that could be served by bus $\bs_i$. Note that a passenger could be served by multiple buses. To avoid counting the same passenger multiple times when we calculate the marginal gain, we maintain another parameter $N_{ToBeServed}$ to capture the number of passengers in $L_{P}$ that are still waiting for services. The initial value of $N_{ToBeServed}$ is set to be the cardinality of $L_{P}$, and its value will be reduced every time when a passenger in $L_{P}$ is served by another bus. The latter is for passengers $\ps \in \Ps$, maintaining a list of buses that could serve the passenger $\ps$. The boolean $IsServed$ is to indicate whether any of the optional buses has been scheduled with an initial value being $false$. For example, if bus $\bs_1$ is selected, it could serve three passengers based on $N_{ToBeServed}$'s value associated with $b_1$ in forward list. Meanwhile, $IsServed$'s value of passengers in $L_{P}$ of $\bs_1$ (i.e., $p_1,p_3, p_{|\Ps|}$) will be changed to $true$, all the buses that could serve $p_1$ or $p_3$ or $p_{|\Ps|}$ have to update $N_{ToBeServed}$'s value to reflect the fact that some of their potential passengers have already been served. 
%
%(1) Forward list. For each bus $\bs_i$, we keep a forward list of persons that can be served by this bus and count the person who is not served by any $\bs \in \Fr$ as Serve Number. (2) Inverted list. For each person $\ps_j$, we keep an inverted list of buses that are all options of $\ps_j$ and use IsServed as a flag to record whether $p_j$ is served by any $\bs \in \Fr$. If $p_j$ is served by any $\bs \in \Fr$, we set IsServed=true, otherwise, IsServed=false. \song{add example}

\smallVspace
\section{Partition-based Greedy Method} \label{sec_5}
\smallVspace

%\bao{check and change all: partition based $\rightarrow$ partition-based}

In practice, a bus network is designed to cover different parts of a city to meet residents' various travel demands. By design, it tries to avoid unnecessary overlapping among routes~\cite{fletterman2009designing,DBLP:journals/tkde/WangBCSC18}. For example, Figure~\ref{fig:bus-routes} plots three popular bus routes in Singapore. A passenger whose travel demand could be served by route 67 will not consider route 161 or route 147 as these routes have \emph{zero} overlap. This observation suggests that it might be unnecessary to scan the entire bus network when calculating the marginal gains of certain buses. This motivates us to design a partition-based greedy method. In the following, we first introduce a novel concept namely \emph{service overlap ratio} to guide the partitioning process, and then present the algorithm. 
%
%design tries to avoid unnecessary overlapping routes~\cite{fletterman2009designing,lee2005transit}. 
%
%Figure~X shows that the overlapping in the real bus network is small. Obviously, the bus $\bs_1$ which serves the route $\rt_1$ cannot satisfy the passenger who wants to take the route $\rt_2$ if $\rt_1$ and $\rt_2$ have no overlapping. Therefore,  we propose a partition based greedy method which contains two steps to reduce the computation cost:
%\begin{enumerate}
%	\item [I.] Partition $\Bs$ into a set of cluster according to the service overlap ratio;
%	\item [II.] Run \greedy for each cluster to find the serve frequency for each routes and merge them to $\Fr$.
%\end{enumerate}

\begin{figure}[t]
	\centering
	\smallVspace
	\subfigure[Bus Route 67]
	{\label{fig:bus-67}\includegraphics[width=0.3\textwidth]{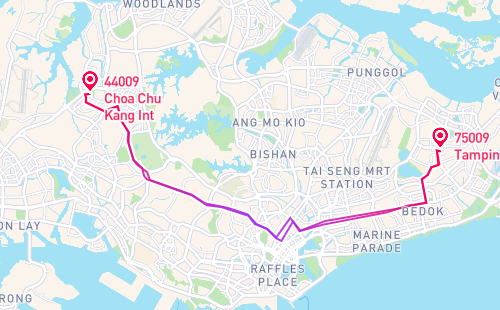}}
	\hspace{0.03\textwidth}
	\subfigure[Bus Route 147]
	{\label{fig:bus-147}\includegraphics[width=0.3\textwidth]{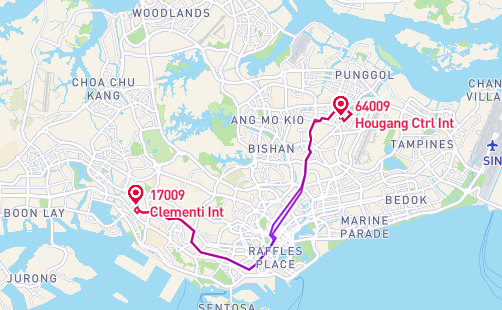}}
	\hspace{0.03\textwidth}
	\subfigure[Bus Route 161]
	{\label{fig:bus-147}\includegraphics[width=0.3\textwidth]{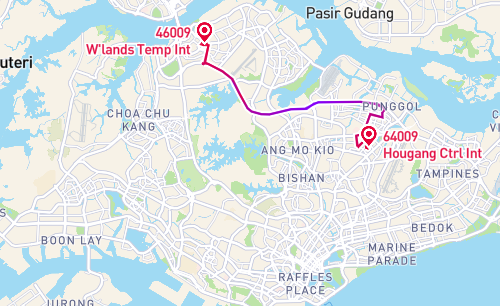}}
	\smallVspace
	\caption{Visualization of three popular bus routes in Singapore}
	\label{fig:bus-routes}
	\bigVspace
\end{figure}

%In the rest of this section, we first introduce how to find a partition of $\Bs$, and we then introduce the partition based greedy method.

%\subsection{Finding a Partition.}

%A partition of a set refers to a decomposition of the set, as presented in Definition~\ref{defn:partition}. 
Our main idea is to partition the bus routes (and buses) into disjoint clusters, and then use a divide-and-conquer strategy to find local optimal frequencies for routes in each partition. This approach is expected to reduce the time complexity of the basic greedy by a factor of $m^2$ with $m$ being the number of partitions. The speedup is contributed by the fact that it invokes the greedy algorithm for each cluster and hence it only needs to scan the buses and passengers corresponding to the routes in a cluster during the greedy search. Meanwhile, in term of accuracy, we introduce a novel concept called \emph{service overlap ratio} to achieve an approximation ratio with non-trivial theoretical guarantee, as shown later.
\iffalse
approximation as it merges the local optimal frequencies to form the final answer. In other words, it trades in accuracy for efficiency. However, we would like to highlight that our approximation has a theoretical bound which is enabled by the novel concept of \emph{service overlap ratio}. In the following, we explain the intuition behind the concept and propose a partitioning algorithm based on the service overlap ratio. 
\fi

\begin{definition}[Partition]\label{defn:partition}
	A partition of  a set $S$ is denoted as a cluster set $\Cluster$=\{$\Cluster_1$, $\Cluster_2$, $\cdots$, $\Cluster_m$\}, where $m$ denotes the total number of clusters, such that $S = \cup_{i=1}^m \Cluster_i$, $\forall \Cluster_i \in \Cluster$, $\Cluster_i \ne \phi$, and $\forall \Cluster_i, \Cluster_j \in \Cluster$ with $i\ne j$, $\Cluster_i\cap \Cluster_j = \phi$.
	%$\Bs = \Cluster_1 \cup \Cluster_2 \cup \cdots \cup \Cluster_i \cup ... \cup \Cluster_m$, and $\forall i \ne j$, $\Cluster_i\cap \Cluster_j = \phi$.
\end{definition}

%We define a person $\ps$ served by route $\rt_i$ if $\rt_i$ contains $\ps.s_b$ and $\ps.s_e$ in order.
To better illustrate the service overlap ratio, we define a function \textsf{Serve}($P$,$R$) that takes a passenger set $P$ and a route set $R$ as inputs and returns the passengers in $P$ that could be served by any route in $R$ without considering the temporal factor. To be more specific, a passenger $\ps$ will be returned by \textsf{Serve}($P$,$R$) if there is a route $\rt_i \in R$ such that $\rt_i$ contains $\ps.s_b$ and $\ps.s_e$ in order, which is different from the ``bus serves passengers'' defined in Definition~\ref{defn:serve}. We name the set of passengers returned by \textsf{Serve}($P$,$R$) as the passenger pool w.r.t. bus routes $R$. 

As stated in Definition~\ref{ratio}, the service overlap ratio $\rho_i$ of a bus route cluster $\Cluster^R_i$ tries to measure the number of passengers in the passenger pool w.r.t. $\Cluster^R_i$ that actually also belong to the passenger pools w.r.t. other clusters.  Let $|A|$ denote the cardinality of the set $A$, and $\overline{\Fr}_i$ denote a bus service frequency returned by $\rm{\greedy}(\Cluster^B_i, \Cluster^R_i, \Cluster^P_i, N_{min})$. $\Cluster^B_i$, $\Cluster^R_i$, and $\Cluster^P_i$ refer to a cluster of buses, a cluster of routes and a cluster of passengers respectively, and $N_{min}$ refers to a $|\Cluster^R_i|$-dimensional vector in the form of $\langle n_{min}, n_{min}, \cdots, n_{min}\rangle$. The parameter $n_{min}$ is set to the minimum number of buses required by any route. Although there are different ways to quantify the overlaps between bus routes, we define $\rho_i$ in such a way that a partition-based greedy guided by $\rho_i$ can achieve a theoretical bound, as to be detailed next. 
%
%Let set $st_i$ denote all the passengers $\ps \in \Ps$ that could be served by bus route $\rt_i$. Similarly, we have set $St_j = \bigcup\nolimits_{{\rt_i} \in {\Cluster_j}} {{st_i}} $ to denote all passengers $\ps$ whom could be served by at least one route in $\Cluster_j$. Then the service overlap between $\Cluster_i$ and $\Cluster_j$ is denoted as set $St_{ij}=St_{ji}=St_i \cap St_j$. Let $N_{min}$ denote the minimum number of buses for each route and $\overline{\Fr}_i = \rm{\greedy}(\Bs_i, \Cluster_i, \Ps, N_{min})$. Based on $St_{ij}$ and $\overline{\Fr}_i$, we introduce the service overlap ratio $\rho$, which controls the maximum service overlap between all clusters.

\begin{definition}[Service overlap ratio]\label{ratio}
	Given a partition  $\Cluster^R$ of the original bus route database $\Route$, for a cluster $\Cluster^R_i$, the ratio $\rho_i$ of the service overlap between $\Cluster^R_i$ and the rest clusters is $ \frac{ \left| {\bigcup\nolimits_{{\Cluster^R_j} \in {\Cluster^R \backslash \Cluster^R_i}} \textsf{Serve}(\Ps, \Cluster^R_i) \cap \textsf{Serve}(\Ps, \Cluster^R_j)} \right|}{\G(\overline{\Fr}_i)}$. 
	%\begin{equation*}
	%%\rho_i =\frac{ \left| {\bigcup\nolimits_{{\Cluster_j} \in {\Cluster \backslash \Cluster_i}} {{St_{ij}}}} \right|}{\G(\overline{\Fr}_i)},
	%%
	%\rho_i = \frac{ \left| {\bigcup\nolimits_{{\Cluster^R_j} \in {\Cluster^R \backslash \Cluster^R_i}} \textsf{Serve}(\Ps, \Cluster^R_i) \cap \textsf{Serve}(\Ps, \Cluster^R_j)} \right|}{\G(\overline{\Fr}_i)}
	%\end{equation*}
	%where function $|A|$ denotes the number of elements in set $A$.
\end{definition}

\setlength{\algomargin}{1.2em} 
\begin{algorithm}[t]
	\caption{\partitiongreedy$(\Bs, \Route, \Ps, \Number, \rho)$ }
	\label{partitiongreedy}
	\begin{small}
		{\bf Input:} a bus database $\Bs$, a bus route database $\Route$, a passenger database $\Ps$, and a vector $\Number$ $\langle n_1, n_2, \cdots, n_{|\Route|} \rangle$, a controlling threshold $\rho$
		
		{\bf Output:} a bus service frequency $\Fr$
		
		initialize $\Cluster^R\gets \phi$, $\Cluster^B\gets \phi$, $S_P \gets \phi$, $n_{min}\gets Min_{1\le i \le |\Route|}n_i$, $\Fr \gets \phi$
		
		$(\Cluster^B, \Cluster^R) \gets \rm{\partition}(\Bs, \Route, n_{min}, \rho)$
		
		%Initialize $\Fr \gets \phi$
		
		\For{each cluster $\Cluster^R_i \in \Cluster^R$}
		{	
			$S_P \gets \textsf{Serve}(\Ps, Cluster^R_i)$, $\Fr \gets \Fr \cup \rm{\greedy}(\Cluster^B_i, \Cluster^R_i, S_P, \Number)$\label{part:replace}
			
		}
		\Return {$\Fr$}
	\end{small}
\end{algorithm}

\noindent
\textbf{Partitioning of bus routes and buses}.  Algorithm~\ref{partition} lists the pseudo-code of a bus route partitioning method guided by service overlap ratio. It first partitions the routes using the finest granularity by forming a cluster for each bus route. Thereafter, it checks the service overlap ratio $\rho_i$ for each cluster $\Cluster^R_i$ and picks the one with the largest $\rho_i$, denoted as $\Cluster^R_k$, for expansion (Line \ref{code:find_k}). It selects the cluster $\Cluster^R_j$ that shares the largest common passenger pool with $\Cluster^R_k$(Line~\ref{code:find_j}) and merges $\Cluster^R_j$ with $\Cluster^R_k$ (Lines~\ref{code:merge} - \ref{code:merge_end}). Note that when cluster $\Cluster^R_k$ is expanded, let $\overline{\Fr}_k$ denote the new frequency returned by \rm{\greedy}$(\Cluster^B_k, \Cluster^R_k, \Ps, N_{min})$. $\G({\overline{\Fr}_k})$ is actually required when calculating $\rho_k$ for this expanded cluster, by Definition~\ref{ratio}. However, to reduce the computation cost and the complexity, we use $\mathcal{L}=max\{\G({\overline{\Fr}_k}) + \G({\overline{\Fr}_j})-|S_k\cap S_j|, \G({\overline{\Fr}_k}) , \G({\overline{\Fr}_j})\}$ as an approximation of $\G({\overline{\Fr}_k})$. According to our merger rules, $\mathcal{L}$ is a lower bound of $\G({\overline{\Fr}_k})$ and it does not affect the accuracy of our partition algorithm. This merge-and-expansion process continues until the $\rho_i$s associated with all the clusters $\Cluster^R_i$ fall below the input threshold $\rho$. 

\setlength{\algomargin}{1.2em} 
\begin{algorithm}[t]
	\caption{\partition$(\Bs, \Route, n_{min}, \rho$)}
	\label{partition}
	\begin{small}
		{\bf Input:} a bus database $\Bs$, a bus route database $\Route$, an integer $n_{min}$, and a controlling threshold $\rho$
		
		{\bf Output:} a partition $\Cluster^B$ of $\Bs$ and a partition $\Cluster^R$ of $\Route$
		
		%Initialize all buses serving on route $\rt_i$ as cluster $\Cluster_i$ 
		%Add all $\Cluster_i$ into $\Cluster$
		\For{each bus route $\rt_i \in Route$}
		{
			initialize $\Cluster^R_i \gets \{\rt_i\}$, $\Cluster^B_i \gets \{b_{ab} \in \Bs |a = i\}$, $S_i \gets \textsf{Serve}(\Ps, Cluster^R_i)$
			
			$\overline{\Fr}_i \gets  \rm{\greedy}(\Cluster^B_i,\Cluster^R_i,\Ps, N_{min})$
		}
		
		initialize $\Cluster^R \gets \cup_{\rt_i \in \Route} \Cluster^R_i$
		
		%Compute $St_{ij}$ for any $\Cluster_i$ and $\Cluster_j$ 
		
		\For{$\Cluster^R_i \in \Cluster^R$}{
			$\rho_i \gets { \left| {\bigcup\nolimits_{{\Cluster^R_j} \in {\Cluster^R \backslash \Cluster^R_i}} {{S_i \cap S_j}}} \right|}/{\G(\overline{\Fr}_i)}$
			
			%	\If{$\rho_i>Max$ }{
			%		$Max \gets \rho_i$, $k \gets i$
			%	}
		}
		
		$k \gets 	\argmax_{\Cluster^R_k \in \Cluster^R} \rho_k$, $Max \gets \rho_{k}$ \label{code:find_k}
		
		\While{$Max> \rho$} 
		{\label{part:1}
			%$Max\gets 0$
			
			$j \gets \argmax_{\Cluster^R_j \in \Cluster^R\backslash \Cluster^R_k} |(S_j \cap S_k)|$ \label{code:find_j}
			%Choose $\Cluster^R_j$ which maximizes $S_i \cap S_k$
			
			%Merge $\Cluster_k$ and $\Cluster_j$ as new $\Cluster_k$ 
			$\Cluster^R_k\gets \Cluster^R_k \cup \Cluster^R_j$, $\Cluster^R \gets \Cluster^R -\Cluster^R_j$, 
			$\Cluster^B_k\gets \Cluster^B_k \cup \Cluster^B_j$, $\Cluster^B \gets \Cluster^B -\Cluster^B_j$ \label{code:merge}
			
			%$S_k \gets S_k \cup S_j	$
			
			%$\overline{\Fr}_k \gets  \overline{\Fr}_k + \overline{\Fr}_j $, 
			$\G({\overline{\Fr}_k}) \gets max\{\G({\overline{\Fr}_k}) + \G({\overline{\Fr}_j})-|S_k\cap S_j|, \G({\overline{\Fr}_k}) , \G({\overline{\Fr}_j})\}$, $S_k \gets S_k \cup S_j$
			
			$\rho_k \gets \frac{\left| {\bigcup\nolimits_{{\Cluster^R_l} \in {\Cluster^R \backslash \Cluster^R_k}} {{S_l \cap S_k}}} \right|}{\G({\overline{\Fr}_k})}$
			
			\label{code:merge_end}
			% $\rho_k \gets \frac{\rho_k\G(\overline{\Fr}_k)+\rho_j\G(\overline{\Fr}_j)-2*|S_k\cap S_j|}{\G(\overline{\Fr}_k)+\G(\overline{\Fr}_j)-|S_k\cap S_j|}$
			
			%	 \For{$\Cluster_i \in \Cluster$}{
			%	 	$St_{ki}\gets St_{ki}\cup St_{kj}$
			%	 	
			%	 	\If{$\rho_i>Max$ }{
			%	 		$Max\gets \rho_i$, $k \gets i$
			%	 	}
			%	 } \label{part:2}
			
			$k \gets 	\argmax_{\Cluster^R_k \in \Cluster^R} \rho_k$, $Max \gets \rho_{k}$
			
		}
		%\For{$\Cluster^R_i \in \Cluster^R$}{\label{code:partition_p}
		%	initialize $\Cluster^P_i \gets S_i$ \label{code:partition_p_end}
		%}
		\Return {$\Cluster^B$, $\Cluster^R$}
	\end{small}
\end{algorithm}

When the bus routes and buses are partitioned, it invokes the basic greedy method (Section~\ref{sec_4}) to find the frequency for each cluster, and merges the local frequencies for $|\Cluster^R|$ clusters as the final answer. We name this approach as \partitiongreedy. Its pseudo-code is shown in Algorithm~\ref{partitiongreedy} and its approximation ratio is analyzed in Lemma~\ref{part:appro:1}.

%Next, we analyze the approximation ratio and time complexity of Algorithm~\ref{partitiongreedy}.

%\noindent
%\textbf{Approximation ratio of \partitiongreedy.}

\begin{lemma} \label{part:appro:1}
	Given a partition $\Cluster^R$=\{$\Cluster^R_1$, $\Cluster^R_2$, $\cdots$, $\Cluster^R_i$, $\cdots$, $\Cluster^R_m$\}  of the bus route database $\Route$ and the maximum service overlap ratio $\rho$, \partitiongreedy achieves a $(1-\rho)(1-1/e)$ approximation ratio to solve the \Problem problem.
\end{lemma}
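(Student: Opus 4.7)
The plan is to chain three inequalities: a per-cluster decomposition of any optimal solution that lower bounds $\sum_i \G_i(\Fr_i)$ in terms of $\G(\opt)$; the per-cluster Nemhauser $(1-1/e)$ greedy guarantee; and an upper bound on the cross-cluster double counting in terms of $\rho$. Throughout, write $\Fr_i$ for the frequency returned by the greedy invocation on cluster $i$ in Algorithm~\ref{partitiongreedy}, $S_i = \textsf{Serve}(\Ps, \Cluster^R_i)$ for cluster $i$'s passenger pool, $\G_i$ for $\G$ restricted to $S_i$, $\opt$ for an optimal solution to \Problem, and $\opt^*_i$ for the best cluster-$i$ frequency under the original budget vector $\Number$.

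First I would prove $\G(\opt) \le \sum_i \G_i(\opt^*_i)$. Any passenger $p$ served by $\opt$ is served by some bus $b \in \opt$; since the route partition $\Cluster^R$ is disjoint, $b$ lies in a unique $\Cluster^B_k$, so $p \in S_k$ and $\opt \cap \Cluster^B_k$ is a feasible cluster-$k$ frequency that serves $p$. Summing over $p$ gives $\G(\opt) \le \sum_i \G_i(\opt \cap \Cluster^B_i) \le \sum_i \G_i(\opt^*_i)$. Combining this with the per-cluster Nemhauser bound $\G_i(\Fr_i) \ge (1-1/e)\,\G_i(\opt^*_i)$---which applies because $\G_i$ inherits monotone submodularity from $\G$ (Theorem~\ref{theorem:G}) and the cluster-$i$ subproblem has the same structural form as \Problem---yields $\sum_i \G_i(\Fr_i) \ge (1-1/e)\,\G(\opt)$.

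Second, I would bound the double counting $D = \sum_i \G_i(\Fr_i) - \G(\Fr)$. Any passenger served by two different $\Fr_i$ and $\Fr_j$ lies in $S_i \cap S_j$, hence in the overlap region $O_i = \bigcup_{j \ne i} S_i \cap S_j$ for both indices. Letting $X_p$ be the number of clusters whose $\Fr_i$ serves $p$ and $Y_p$ the number of pools containing $p$, the bound $X_p \le Y_p$ together with $X_p \ge 2 \Rightarrow Y_p \ge 2$ gives $D = \sum_{p:X_p\ge 2}(X_p-1) \le \sum_{p:Y_p\ge 2} Y_p = \sum_i |O_i|$. The termination condition of Algorithm~\ref{partition} ensures $\rho_i \le \rho$ for every final cluster, so $|O_i| \le \rho\,\G(\overline{\Fr}_i)$ by Definition~\ref{ratio}; a budget-monotonicity argument for greedy then gives $\G(\overline{\Fr}_i) \le \G_i(\Fr_i)$, whence $D \le \rho \sum_i \G_i(\Fr_i)$. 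Putting it together, $\G(\Fr) = \sum_i \G_i(\Fr_i) - D \ge (1-\rho)\sum_i \G_i(\Fr_i) \ge (1-\rho)(1-1/e)\,\G(\opt)$, as claimed.

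The step I expect to be the main obstacle is justifying $\G(\overline{\Fr}_i) \le \G_i(\Fr_i)$, since the greedy's output depends on its stopping rule and plain budget monotonicity is not automatic on a partition-matroid-like feasibility structure. I would attempt a direct exchange argument on the greedy trace, using that the larger-budget run can only defer or add selections relative to the $N_{min}$ run on the same cluster. If such a pointwise comparison fails, the weaker chain $\G(\overline{\Fr}_i) \le \G_i(\opt^*_i) \le (1-1/e)^{-1}\G_i(\Fr_i)$ still gives a bound of the same qualitative form (at the cost of a constant rescaling of $\rho$), so the conclusion is robust. The double-counting bookkeeping also deserves to be written out carefully, because a passenger in $k$ overlapping pools contributes only $k-1$ to $D$ while being counted $k$ times in $\sum_i |O_i|$; this only strengthens the inequality in the needed direction, but the argument should be made explicit.
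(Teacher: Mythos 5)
Your proposal follows essentially the same route as the paper's proof: decompose the optimum across clusters via $\G(\OPT)\le\sum_i\G(\OPT_i)$, apply the per-cluster $(1-1/e)$ Nemhauser bound, and charge the loss from merging the local solutions to the service-overlap sets, each of size $\rho_i\,\G(\overline{\Fr}_i)\le\rho\,\G(\Fr_i)$. The only differences are presentational: the paper controls the double counting through the marginal contributions $\G(\Fr^*)-\G(\Fr^*\backslash\Fr_i)$ and a submodularity telescoping rather than your direct per-passenger count, and the step you flag as the main obstacle, $\G(\overline{\Fr}_i)\le\G(\Fr_i)$, is simply asserted there as a consequence of monotonicity and submodularity, so your explicit attention to it (and your fallback) only makes the argument more careful than the original.
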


\begin{proof}
	Let $\Fr_i$ denote the solution obtained by $\textrm{\greedy}$ for cluster $\Cluster^R_i$, $\Fr^*$ denote the solution obtained by $\textrm{\partitiongreedy}$, $\OPT_i$ denote the optimal solution for cluster $\Cluster^R_i$, and $\OPT$ denote the global optimal solution. In Algorithm~\ref{partition}, it uses the lower bound of the $\G({\overline{\Fr}_k})$ to compute the upper bound of $\rho_k$ and terminates when the upper bound of $\rho_i$ for every cluster $\Cluster^R_i\in\Cluster^R$ is no greater than the given threshold $\rho$. Then we have $\rho\ge\rho_i$ for any $\Cluster^R_i\in\Cluster^R$. Recall  Section~\ref{sec_3}, the basic greedy method is proved to achieve $(1 - 1/e)$-approximation. Therefore, we have $\G(\Fr_i)\geq(1-1/e)\G(\OPT_i)$. Because of the submodularity and monotonicity of $\G$, we have $\sum_{i = 1}^{m} {{\G(\OPT_i)}}\ge\G(\OPT)$ and $\G(\Fr_i)\ge\G(\overline{\Fr}_i)$. Then, by Definition~\ref{ratio} we have:
	\begin{equation}\label{ineq:overlap1}
	\begin{aligned}
	\left| {\bigcup\nolimits_{{\Cluster^R_j} \in {\Cluster^R \backslash \Cluster^R_i}} \textsf{Serve}(\Ps, \Cluster^R_i) \cap \textsf{Serve}(\Ps, \Cluster^R_j)} \right| = \rho_i\G(\overline{\Fr}_i)\le\rho\G(\Fr_i).
	\end{aligned} 
	\end{equation}
	In addition, Inequality~(\ref{ineq:overlap2}) holds according to Definition~\ref{defn:problem}.
	\begin{equation}\label{ineq:overlap2}
	\begin{aligned}
	\left| {\bigcup\nolimits_{{\Cluster^R_j} \in {\Cluster^R \backslash \Cluster^R_i}} \textsf{Serve}(\Ps, \Cluster^R_i) \cap \textsf{Serve}(\Ps, \Cluster^R_j)} \right| \ge \G(\Fr_i) - (\G(\Fr^*)-\G(\Fr^*\backslash\Fr_i))
	\end{aligned} 
	\end{equation}
	Based on Inequality~(\ref{ineq:overlap1}) and Inequality~(\ref{ineq:overlap2}), we have $\G(\Fr^*)-\G(\Fr^*\backslash\Fr_i) \ge (1-\rho)\G(\Fr_i)$. 
	%\begin{equation*}
	% \G(\Fr^*)-\G(\Fr\backslash\Fr_i) \ge (1-\rho)\G(\Fr_i).
	%\end{equation*}
	Using the principle of inclusion-exclusion, we have $ \G(\Fr^*)=\G(\Fr_1\cup\Fr_2\cup...\cup\Fr_m) \ge \sum_{i = 1}^{m} (\G(\Fr^*)-\G(\Fr^*\backslash\Fr_i)) \ge (1-\rho)\sum_{i = 1}^{m} {{\G(\Fr_i)}} \ge (1-\rho)(1-1/e)\sum_{i = 1}^{m} {{\G(\OPT_i)}} \ge (1-\rho)(1-1/e)\G(\OPT)$.
	%\begin{equation*}
	%\begin{aligned}
	% \G(\Fr^*)&=\G(\Fr_1\cup\Fr_2\cup...\cup\Fr_m)\\
	% &\ge \sum\limits_{i = 1}^{m} (\G(\Fr^*)-\G(\Fr\backslash\Fr_i))\\
	% &\ge (1-\rho)\sum\limits_{i = 1}^{m} {{\G(\Fr_i)}}\\
	% &\ge (1-\rho)(1-1/e)\sum\limits_{i = 1}^{m} {{\G(\OPT_i)}}\\
	% &\ge (1-\rho)(1-1/e)\G(\OPT).
	%\end{aligned} 
	%\end{equation*} 
	%
	Thus, this lemma is proved. $\blacksquare$	
\end{proof}

%\noindent
%\textbf{Time complexity of \partitiongreedy.}

\smallVspace
\section{Progressive Partition-based Greedy Method} \label{sec_6}
\smallVspace

%The partition based greedy invokes Algorithm~\ref{greedy} for each local solution. For each iteration in Algorithm~\ref{greedy} (Line~\ref{marginal}), the greedy selection needs to recalculate the marginal $\G(\Fr \cup \bs)- \G(\Fr)$ for all $\bs \in \Bs \backslash\Fr$, and chooses the maximum, which incurs high computation cost. Motivated by this,  we propose a progressive greedy method without traversing the bus database with an approximation ratio of $(1-1/e-\varepsilon)$, where $\varepsilon$ is a tunable parameter that provides a trade-off between efficiency with accuracy.
%
Although \partitiongreedy improves the efficiency of basic greedy by conducting the search within each partition (though not the original route/bus database), it still suffers from a high computational cost. To be more specific, in each iteration of the greedy search (either a global search or a local search by \greedy), in order to find the one with the maximum gain, it has to recalculate the marginal gain $\G(\Fr \cup \bs)- \G(\Fr)$ for all the buses not yet scheduled. 

Motivated by this observation, we propose a progressive partition-based greedy method (\progressivepartgreedy). It selects multiple, but not only one, buses in each local greedy search iteration to cut down the total number of iterations required and hence the computation cost. The pseudo-code of \progressivepartgreedy is the same as Algorithm~\ref{partitiongreedy} except that the call of \greedy is replaced with Function~1 (\progressivegreedy) in line~\ref{part:replace} of Algorithm~\ref{partitiongreedy}. Meanwhile, we will prove that it can achieve an approximation ratio of $(1-\rho)(1-1/e-\varepsilon)$, where $\rho$ and $\varepsilon$ are tunable parameters that provide a trade-off between efficiency and accuracy. %Then we replace the call of \greedy to \progressivegreedy as progressive partition-based greedy method (\progressivepartgreedy) in line~\ref{part:replace} of algorithm~\ref{partitiongreedy}.
\iffalse
Note that this strategy hurts the accuracy but we design the progressive greedy method in such a way that it can achieve an approximation ratio of $(1-1/e-\varepsilon)$, where $\varepsilon$ is a tunable parameter that provides a trade-off between efficiency and accuracy.
\fi

\setlength{\algomargin}{1.2em} 
\renewcommand{\algorithmcfname}{Function}
\setcounter{algocf}{0}
\begin{algorithm}[t]
	\caption{\progressivegreedy$(\Bs, \Route, \Number, \varepsilon)$ }
	\label{progressivegreedy}
	\begin{small}
		{\bf Input:} a bus database $\Bs$, a bus route database $\Route$, a vector $\Number$, and a parameter $\varepsilon$
		
		{\bf Output:} a bus service frequency $\Fr$
		
		Initialize $\Fr \gets \phi$, $\num \gets \sum_{i = 1}^{|\Number|} {{n_i}}$
		
		Initialize a $|\Number|$-dimension vector $\langle k_1, k_2, \cdots, k_{|\Number|}\rangle$ with zero
		
		%Initialize $\num \gets \sum\limits_{i = 1}^{|\Number|} {{n_i}}$
		
		Sort $\bs \in \Bs$ based on descending order of  $\G(\bs)$ %$\G_b(\Fr)\gets  \G(\Fr \cup \bs)- \G(\Fr)$
		
		%$h \gets  \mathop {\max }\limits_{\bs \in \Bs}(\G(\Fr \cup \bs)- \G(\Fr))$
		Initialize $h \gets \mathop {\max }_{\bs \in \Bs}(\G(\bs))$
		
		\While{$|\Fr|\leq\num$}
		{
			\For{each $\bs_{jl}\in\Bs$}{ \label{pro:start}
				
				\If{$|\Fr|\leq\num$}{
					
					$\G_{b_{jl}}(\Fr) \gets \G(\Fr \cup \bs_{jl})- \G(\Fr)$
					
					\If{$\G_{b_{jl}}(\Fr)\geq h$}{
						$\Fr \gets \Fr \cup \bs_{jl}$, $\Bs \gets \Bs \backslash \bs_{jl}$
						
						$k_j++$
						
						\If{$k_j\geq n_j$}
						{
							remove all bus serve the route $\rt_j$ from $\Bs$
						}
						
					}
					
					%\If{$\G_{b_{jl}}(\phi) < h$}{ \label{pro:s1}
					\If{$\G(b_{jl}) < h$}{ \label{pro:s1}
						\textbf{break}
					}\label{pro:e1}
				}\Else{
					\textbf{break}
				}
				
			}
			
			$h \gets  \frac{h}{1+\epsilon}$\label{pro:end}
			
		}
		\Return {$\Fr$}
	\end{small}
\end{algorithm}

As presented in Function~1, \progressivegreedy first sorts $\bs \in \Bs$ by $\G(\bs)$ and initializes the threshold $h$ to the value of $\mathop {\max }_{\bs \in \Bs}(\G(\bs))$. Then, it iteratively fetches all the buses with their marginal gains not smaller than $h$ into $\Fr$ and meanwhile lowers the threshold $h$ by a factor of $(1+\varepsilon)$ for next iteration (Lines~\ref{pro:start}-\ref{pro:end}). The iteration continues until there are $n$ buses in $\Fr$. Unlike the basic greedy method that has to check all the potential buses in $\Bs$ or a cluster of $\Bs$ in each iteration, it is not necessary for \progressivegreedy as it implements an early termination (Lines~\ref{pro:s1}-\ref{pro:e1}). Since buses are sorted by $\G(\bs)$ values, if $\G(\bs_{jl})$ of the current bus is smaller than $h$, all the buses $\bs$ pending for evaluation will have their $\G(\bs)$ values smaller than $h$ and hence could be skipped from evaluation. In the following, we first analyze the approximation ratio of Function~1 by Lemma~\ref{pro:appro:1}. Based on Lemma~\ref{pro:appro:1}, we show the approximation ratio of \progressivepartgreedy by Lemma~\ref{pro:appro:2}. 

%The iteration continues until all the buses in $\Bs$ have their marginal gain smaller than the threshold $h$ (Lines~\ref{pro:s1}-\ref{pro:e1}). Because of the submodularity of $\G$, if $\G_b(\phi) < h$ then $\G_b(\Fr) < h$. Since buses are sorted by $\G_b(\Fr)$, we then have an early termination.

%it iteratively lowers the threshold $h$ by a factor of $(1+\varepsilon)$ to select more buses in $\Fr$ (Lines~\ref{pro:start}-\ref{pro:end}). Lines~\ref{pro:s1}-\ref{pro:e1} break the loop once the individual marginal gain is found to be smaller than $h$. Due to the submodularity of $\G$, if $\G_b(\phi) < h$ then $\G_b(\Fr) < h$. Since buses are sorted by $\G_b(\Fr)$, we then have an early termination.

%In the rest of this section, we analyze Algorithm~\ref{progressivegreedy} in terms of approximation ratio and complexity.\\

%\noindent
%\textbf{Approximation ratio of \progressivegreedy.}

\begin{lemma} \label{pro:appro:1}
	\progressivegreedy achieves a $(1-1/e-\varepsilon)$ approximation ratio. %for the local solution.% to \Problem.
\end{lemma}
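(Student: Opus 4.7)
The plan is to adapt the classical threshold-greedy analysis (in the spirit of Badanidiyuru and Vondr\'{a}k) to the \Problem setting. I will show that at each insertion, \progressivegreedy's marginal gain is within a factor of $1/(1+\varepsilon)$ of a pure greedy choice, so that the $(1-1/e)$ argument behind Algorithm~\ref{greedy} degrades gracefully into $(1-1/e-\varepsilon)$. Let $\Fr_t$ denote the set of selected buses after the $t$-th insertion, let $b_t$ be the bus inserted at step $t$, and let $h_t$ be the value of the threshold $h$ at that moment. By the acceptance test in Function~\ref{progressivegreedy}, $\G(\Fr_t) - \G(\Fr_{t-1}) = \G_{b_t}(\Fr_{t-1}) \geq h_t$, which is the easy direction.

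The key technical step is to establish the matching upper bound: for every feasible unselected bus $v \notin \Fr_{t-1}$, $\G_v(\Fr_{t-1}) < (1+\varepsilon)\, h_t$. For $t = 1$ this follows directly from the initialization $h_1 = \max_b \G(b)$. For $t > 1$, I will consider the previous threshold level $(1+\varepsilon)h_t$ and let $\Fr'$ be the state of the selected set at the moment $v$ was (or would have been) processed at that level. Two cases arise: (a) $v$ was explicitly evaluated at that level and not inserted, so the failed acceptance test gives $\G_v(\Fr') < (1+\varepsilon)h_t$ directly; or (b) the inner loop broke early before reaching $v$ via lines~\ref{pro:s1}--\ref{pro:e1}, meaning some predecessor $v'$ in the sorted order satisfied $\G(v') < (1+\varepsilon)h_t$, and since buses are sorted by descending $\G(b) = \G_b(\emptyset)$ we also have $\G(v) \leq \G(v')$. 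In both cases, submodularity together with $\Fr' \subseteq \Fr_{t-1}$ closes the argument: $\G_v(\Fr_{t-1}) \leq \G_v(\Fr') \leq \G(v) < (1+\varepsilon)h_t$.

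Combining the two bounds yields $\G_{b_t}(\Fr_{t-1}) \geq \frac{1}{1+\varepsilon}\max_v \G_v(\Fr_{t-1})$. A standard submodularity argument, identical to the one underpinning the $(1-1/e)$ guarantee of \greedy, then gives $\max_v \G_v(\Fr_{t-1}) \geq (\G(\OPT) - \G(\Fr_{t-1}))/\num$ with $\num = \sum_i n_i$, so that
\[
\G(\OPT) - \G(\Fr_t) \leq \left(1 - \frac{1}{(1+\varepsilon)\num}\right)\bigl(\G(\OPT) - \G(\Fr_{t-1})\bigr).
\]
Unrolling the recurrence over the $\num$ insertions and using $(1 - 1/x)^x \leq 1/e$ produces $\G(\OPT) - \G(\Fr_{\num}) \leq e^{-1/(1+\varepsilon)}\G(\OPT)$, and a short estimate $e^{-1/(1+\varepsilon)} \leq e^{-1}e^{\varepsilon/(1+\varepsilon)} \leq 1/e + \varepsilon$ (for suitably small $\varepsilon$, possibly after a constant rescaling) produces $\G(\Fr_{\num}) \geq (1 - 1/e - \varepsilon)\G(\OPT)$.

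The hardest part I anticipate is the bookkeeping around the early break at lines~\ref{pro:s1}--\ref{pro:e1} combined with the dynamic removal of buses from $\Bs$ whenever a route saturates: one must argue carefully that the set $\Fr'$ invoked above is indeed a subset of $\Fr_{t-1}$ so submodularity applies, that no bus missed by the early break can later reappear with a marginal gain above $(1+\varepsilon)h_t$, and that the ``feasible $v$'' set over which the $\max$ is taken contains the relevant witnesses for the $\OPT$ comparison. Handling the very first threshold and the edge case where saturation removes buses mid-threshold are the places where a clean case analysis will be essential, but no new ideas beyond those above are required.
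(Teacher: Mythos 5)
Your proposal is correct and follows essentially the same route as the paper's proof: the two-sided bound you derive ($\G_{b_t}(\Fr_{t-1})\ge h_t$ for the accepted bus and $\G_v(\Fr_{t-1})<(1+\varepsilon)h_t$ for every unselected bus) is exactly the paper's Equation~(\ref{appro1}), and the per-step recurrence followed by the geometric unrolling to $1-e^{-1/(1+\varepsilon)}\ge 1-1/e-\varepsilon$ is identical. The only difference is that you spell out the case analysis behind the upper bound (previous threshold level, the early break at lines~\ref{pro:s1}--\ref{pro:e1}, and monotone decay of marginal gains under submodularity), which the paper compresses into the single phrase ``because of the submodularity of $\G$.''
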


\begin{proof}
	Let $b_i$ be the bus selected at a given threshold $h$ and $\OPT$ denote the optimal local solution to the problem of selecting $n$ $buses$ that can maximize $\G$. Because of the submodularity of $\G$, we have:
	\begin{equation} \label{appro1}
	{\G_b}(\Fr) = \left\{ \begin{array}{l}
	\ge h\\
	\le h \cdot (1 + \varepsilon ) 
	\end{array} \right. \begin{array}{*{20}{l}}
	{ \textrm{if $b=b_i$}}\\
	{\textrm{if $b\in \OPT\backslash(\Fr\cup b_i)$}},
	\end{array}
	\end{equation}
	where $\Fr$ is the current partial solution. Equation~(\ref{appro1}) implies that $\G_{\bs_i}(\Fr)\geq\G_\bs(\Fr)/(1+\varepsilon)$ for any $\bs \in \OPT\backslash \Fr$. Thus, we have $\G_{\bs_i}(\Fr)\geq \frac{1}{(1+\varepsilon)|\OPT\backslash \Fr|}\sum\nolimits_{\bs \in \OPT\backslash \Fr}\G_\bs(\Fr)   \geq \frac{1}{(1+\varepsilon)n}\sum\nolimits_{\bs \in \OPT\backslash \Fr}\G_\bs(\Fr)$.
	%	\begin{equation*}
	%	\G_{\bs_i}(\Fr)\geq \frac{1}{(1+\varepsilon)|\OPT\backslash \Fr|}\sum\nolimits_{\bs \in \OPT\backslash \Fr}\G_\bs(\Fr)\geq \frac{1}{(1+\varepsilon)n}\sum\nolimits_{\bs \in \OPT\backslash \Fr}\G_\bs(\Fr).
	%	\end{equation*}
	Let $\Fr_i$ denote the partial solution that $\bs_i$ has been included and $\bs_{i+1}$ be the bus selected at the $(i+1)$th step. Then we have $\G(\Fr_{i+1})-\G(\Fr_{i}) =\G_{b_{i+1}}(\Fr_{i}) \geq \frac{1}{(1+\varepsilon)n}\sum\nolimits_{\bs \in \OPT\backslash \Fr_i}\G_\bs(\Fr_i)
	\geq \frac{1}{(1+\varepsilon)n}(\G(\OPT\cup\Fr_{i})-\G(\Fr_{i})) \geq \frac{1}{(1+\varepsilon)n}(\G(\OPT)-\G(\Fr_{i}))$. 
	%\begin{equation*}
	%	\begin{aligned}
	%	\G(\Fr_{i+1})-\G(\Fr_{i})&=\G_{b_{i+1}}(\Fr_{i})\\
	%	&\geq \frac{1}{(1+\varepsilon)n}\sum\nolimits_{\bs \in \OPT\backslash \Fr_i}\G_\bs(\Fr_i)\\
	%	&\geq \frac{1}{(1+\varepsilon)n}(\G(\OPT\cup\Fr_{i})-\G(\Fr_{i}))\\
	%	&\geq \frac{1}{(1+\varepsilon)n}(\G(\OPT)-\G(\Fr_{i}))
	%	\end{aligned}
	%	\end{equation*}
	
	The solution $\Fr^*$ obtained by Function~\ref{progressivegreedy} with $|\Fr^*|= n$. Using the geometric series formula, we have $\G(\Fr^{*})\ge\left( {1 - \left( 1- \frac{{1}}{{(1 + \varepsilon )n}}\right)^n} \right)\G\left( \OPT \right)  \ge\left( {1 -  {e^{\frac{{ - n}}{{(1 + \varepsilon )n}}}}} \right)\G\left( \OPT \right) = \left( {1 - {e^{\frac{{ - 1}}{{(1 + \varepsilon )}}}}} \right)\G\left( \OPT \right) \ge\left( {(1-1/e-\varepsilon)} \right)\G\left( \OPT \right)$. 
	%	\begin{equation*}
	%	\begin{aligned}
	%	\G(\Fr^{*})&\ge\left( {1 - \left( 1- \frac{{1}}{{(1 + \varepsilon )n}}\right)^n} \right)\G\left( \OPT \right)\\
	%	&\ge\left( {1 - {e^{\frac{{ - n}}{{(1 + \varepsilon )n}}}}} \right)\G\left( \OPT \right)\\
	%	&=\left( {1 - {e^{\frac{{ - 1}}{{(1 + \varepsilon )}}}}} \right)\G\left( \OPT \right)\\
	%	&\ge\left( {(1-1/e-\varepsilon)} \right)\G\left( \OPT \right)
	%	\end{aligned}
	%	\end{equation*}
	%
	Hence, the lemma is proved. $\blacksquare$
\end{proof}

\begin{lemma} \label{pro:appro:2}
	Given a partition $\Cluster^R$=\{$\Cluster^R_1$, $\Cluster^R_2$, $\cdots$, $\Cluster^R_i$, $\cdots$, $\Cluster^R_m$\}  of the bus route database $\Route$ and the maximum service overlap ratio $\rho$, \progressivepartgreedy achieves a $(1-\rho)(1-1/e-\varepsilon)$ approximation ratio to solve the \Problem problem.% to \Problem.
\end{lemma}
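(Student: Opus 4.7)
The plan is to follow the structure of the proof of Lemma~\ref{part:appro:1} almost verbatim, substituting the per-cluster bound from Lemma~\ref{pro:appro:1} for the $(1-1/e)$ bound enjoyed by \greedy. Concretely, let $\Fr_i$ denote the frequency returned by $\progressivegreedy$ on cluster $\Cluster^R_i$, let $\Fr^{*}$ be the output of $\progressivepartgreedy$ (the union of the $\Fr_i$'s over $i=1,\ldots,m$), let $\OPT_i$ be the optimal \Problem solution restricted to cluster $\Cluster^R_i$, and let $\OPT$ be the global optimum. By Lemma~\ref{pro:appro:1}, applied independently to each cluster, I immediately get $\G(\Fr_i)\ge (1-1/e-\varepsilon)\,\G(\OPT_i)$ for every $i$.

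Next I would import, unchanged, the two geometric ingredients established inside the proof of Lemma~\ref{part:appro:1}. First, because the partition algorithm \partition terminates with every cluster's service overlap ratio bounded above by $\rho$, and because $\G(\Fr_i)\ge\G(\overline{\Fr}_i)$ by monotonicity, Definition~\ref{ratio} yields
\[
\bigl|{\textstyle\bigcup_{\Cluster^R_j\in\Cluster^R\backslash \Cluster^R_i}} \textsf{Serve}(\Ps,\Cluster^R_i)\cap \textsf{Serve}(\Ps,\Cluster^R_j)\bigr|\le \rho\,\G(\Fr_i).
\]
Second, the same quantity is a lower bound for $\G(\Fr_i)-\bigl(\G(\Fr^{*})-\G(\Fr^{*}\backslash \Fr_i)\bigr)$, since any passenger served by $\Fr_i$ but not by $\Fr^{*}\backslash \Fr_i$ must belong to the passenger pool of $\Cluster^R_i$ and of at least one other cluster. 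Combining these two inequalities gives the key per-cluster marginal bound $\G(\Fr^{*})-\G(\Fr^{*}\backslash \Fr_i)\ge (1-\rho)\G(\Fr_i)$.

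Summing over $i$ and invoking the inclusion-exclusion / union-bound argument already used in Lemma~\ref{part:appro:1}, I would conclude
\[
\G(\Fr^{*})\ge \sum_{i=1}^{m}\bigl(\G(\Fr^{*})-\G(\Fr^{*}\backslash \Fr_i)\bigr)\ge (1-\rho)\sum_{i=1}^{m}\G(\Fr_i).
\]
Applying the \progressivegreedy bound per cluster and then submodularity plus monotonicity of $\G$ (which gives $\sum_i\G(\OPT_i)\ge \G(\OPT)$, exactly as in the previous proof), chains into
\[
\G(\Fr^{*})\ge (1-\rho)(1-1/e-\varepsilon)\sum_{i=1}^{m}\G(\OPT_i)\ge (1-\rho)(1-1/e-\varepsilon)\,\G(\OPT),
\]
which is the claim.

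The main obstacle is essentially a bookkeeping one: the proof is a clean composition of Lemma~\ref{pro:appro:1} (local ratio) with the structural inequalities already established for Lemma~\ref{part:appro:1} (global aggregation via the service overlap ratio). I would only need to double-check that the cluster-local optimum $\OPT_i$ referred to here is the same notion used in Lemma~\ref{pro:appro:1}, namely the best $|\Fr_i|$-bus selection from $\Cluster^B_i$ respecting the route-wise $n_j$ bounds; with that identification the substitution is immediate and no new technical idea is required beyond the two preceding lemmas.
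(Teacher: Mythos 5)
Your proposal is correct and matches the paper's intent exactly: the paper itself omits this proof, stating only that it follows from Lemma~\ref{pro:appro:1} by the same argument as Lemma~\ref{part:appro:1}, which is precisely the substitution you carry out. Your write-up simply makes explicit the composition the authors leave implicit, with no new ideas required.
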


\begin{proof}
	Based on Lemma~\ref{pro:appro:1}, this proof is similar to the proof of Lemma~\ref{part:appro:1}, so we omit it due to space limit. $\blacksquare$
\end{proof}

%\noindent
%\textbf{Time complexity of \progressivegreedy}

%\smallVspace
\vspace{-.8cm}
\begin{table}[]
	\caption{Statistics of datasets}\vspace{-.8em}
	\smallVspace
	\label{datasets}
	\begin{center}
		\begin{tabular}{|p{2cm}<\centering|p{2cm}<\centering|p{2cm}<\centering|p{2.5cm}<\centering|}
			\hline
			Database & Amount & AvgDistance & AvgTravelTime \\ \hline
			$\Bs$ & 451k & N.A. & N.A. \\ \hline
			$\Route$ & 396 & 19.91km & 5159s \\ \hline
			$\Ps$ & 28m & 4.2km & 1342s \\ \hline
		\end{tabular}
	\end{center}
\end{table}
\vspace{-1.2cm}

\smallVspace
\section{Experiment} \label{sec_7}
\smallVspace

In this section, we first explain the experimental setup; we then conduct sensitivity tests to tune the parameters to their reasonable settings, as our algorithms have several tunable parameters; we finally report the performance, in terms of effectiveness, efficiency, and scalability, of all the algorithms. 
%\smallVspace
%\subsection{Experimental setup}

\noindent
\textbf{Datasets.} We crawl the real bus routes $(\Route)$ from transitlink\footnote{\url{https://www.transitlink.com.sg/eservice/eguide/service_idx.php}} in Singapore. Each route is represented by the sequence of bus stop IDs it passes sequentially, together with the distance between two consecutive bus stops. The travel time from a stop to another stop via a route $\rt_i$ is estimated by the ratio of the distance between those two stops along the route to the average bus speed of the route. We use bus touch-on record data (shown later) to find the average travel speed of a particular bus line.
%
%contains the stop id and travel distance between two stops. We use travel distance and the average speed of the line (estimated by the distance and time of passengers travelling on this line) to estimate the travel time between two stops. 
%\baihua{Do we use the real EZlink transaction data to find the average travel speed of a particular bus line? Or we assume the buses serving different lines share the same speed? How many bus routes do we consider?}
%
For the passenger database $(\Ps)$, due to the exhibit regular travel patterns of passengers~\cite{tian2018using}, we use the real bus touch-on record data in a week of April 2016 in Singapore, which is obtained from the authors of~\cite{tian2018using} and contains 28 million trip records. Each trip record includes the IDs/timestamps of the boarding and alighting bus stops, the bus route, and the trip distance. We assume passengers spend $x$ minutes waiting for their buses, with $x$ following a random distribution between 1 and 5 minutes. 
%
%Note that we randomly adjust the boarding time forward 1-5 minutes as user's arrival time. 
Then, we generate the bus candidate set $(\Bs)$ based on the route and service time range. For each route, we use buses that depart every minute between 5am and 12am as the superset of candidate buses. The statistics of those datasets are shown in Table~\ref{datasets}.

\noindent
\textbf{Parameters.}
Table~\ref{parameter} lists the parameter settings, with values in bold being default. In all experiments, we vary one parameter and set the rest to their defaults. We assume all bus routes require the same number of bus departures in our study. 
%
%Note that we arrange the same number of buses for each bus line in our experiments. 
Notation $\langle 20 \rangle$ represents the vector $\langle 20$, $\cdots$, $20 \rangle$ for brevity. 
%\smallVspace
\begin{table}[t]
	\caption{Parameter settings}
	\bigVspace
	\label{parameter}
	\begin{center}
		\begin{tabular}{|p{6.4cm}<\centering|p{5cm}<\centering|}
			\hline
			Parameter & Values \\ \hline
			number of bus departures $\Number =\langle n_1, n_2, \cdots\rangle$ & $\langle 10 \rangle$, $\langle 20 \rangle$, $\bm{\langle 30 \rangle}$, $\langle 40 \rangle$, $\langle 50 \rangle$   \\ \hline
			total passenger number $|\Ps|$ & 100k, 200k, \textbf{300k}, 400k, 500k \\ \hline
			waiting time threshold $\theta$ & 1min, 2min, \textbf{3min}, 4min, 5min \\ \hline
			tunable parameter used by\progressivepartgreedy $\varepsilon$ & $10^{-4}$,  $10^{-3}$, $\bm{10^{-2}}$, $10^{-1}$\\ \hline
			controlling threshold used by \partitiongreedy $\rho$ & 0.1, \textbf{0.2}, 0.3, 0.4 \\ \hline
		\end{tabular}
	\end{center}
	\bigVspace
	\smallVspace
\end{table}

\noindent
\textbf{Algorithm.} To the best of our knowledge, this is the first work to study the \Problem problem, and thus no previous work is available for direct comparison. In particular, we compare the following five methods. \underline{\fixint} that fixes the time interval between two bus departures as $\lfloor$(service time range) / (bus number)$\rfloor$ for each line and chooses the bus that departures at 5am as the first bus; \underline{\topk} that picks top-$k$ buses, which could serve the most number of passengers ($k = n_i$); 
%
%by a descending order of the passenger number served by those buses within the bus number for each line.
\underline{\greedy}, \underline{\partitiongreedy}, and \underline{\progressivepartgreedy}, i.e., Algorithm~\ref{greedy}, Algorithm~\ref{partitiongreedy}, and the progressive partition-based method proposed in this paper.

\noindent
\textbf{Performance measurement.}
We adopt the \emph{total running time} of each algorithm and the \emph{total served passenger number (SPN)} of the scheduled buses as the main performance metrics. We randomly choose 5 million passengers from a week of data and pre-process the passenger dataset to build the index, which takes $5,690$ seconds and occupies $585$MB disk space. 
%
%For each method we evaluate the running time and the served passenger number (SPN) of the selected buses. Before the experiment, we spend 5690 seconds and 585MB disk to index 5 million passengers. 
%
Each experiment is repeated ten times, and the average result is reported.
%\baihua{Passenger number $|\Ps|$ is a parameter. However, we mention the passenger number is 5 million here. I am confused. Do we consider all 5 million passengers or only consider $|\Ps|$ randomly selected passengers? In addition, how do we use one week of EZlink data?}

\noindent
\textbf{Setup.}
All codes are implemented in C++. Experiments are conducted on a server with 24 Intel X5690 CPU and 140GB memory running CentOS release 6.10. We will release the code publicly once the paper is published.

%\vspace{-0.5cm}
%\subsection{Effectiveness Study}
%\smallVspace

%\vspace{-0.25cm}
%\subsection{Parameter Sensitivity Test}
%\smallVspace
\noindent
\textbf{Parameter Sensitivity Test - $\theta$.} The impact of waiting time threshold $\theta$ on the running time and SPN are reported in Figure~\ref{fig:efficiency:theta} and Figure~\ref{fig:effectiveness:theta}, respectively. Parameter $\theta$ has an almost-zero impact on the running time. On the other hand, it affects SPN. As $\theta$ increases, all the algorithms are able to serve more passengers, which is consistent with our expectations. We set $\theta=3$, the mean value.
%
%With the increase of $\theta$, the performance of all algorithms becomes better, as a single bus can serve more passengers. Last, we choose the mean value,  $\theta=3$, as the default setting.

\noindent
\textbf{Parameter Sensitivity Test - $\rho$.} The impact of parameter $\rho$ on the running time and SPN are reported in Figure~\ref{fig:efficiency:rho} and Figure~\ref{fig:effectiveness:rho}, respectively. It has a positive impact on the running time performance but a negative impact on SPN. As $\rho$ increases its value, \partitiongreedy and \progressivepartgreedy both incur shorter running time but serve less number of passengers. We choose  $\rho=0.2$ as the default setting.
%
%the running time of \partitiongreedy and \progressivepartgreedy decreases very slowly and the performance of them keeps decreasing when $\rho>0.2$. Then we choose  $\rho=0.2$ as the default setting.

\noindent
\textbf{Parameter Sensitivity Test - $\varepsilon$.} Parameter $\varepsilon$ only affects \progressivepartgreedy. It controls the trade-off between efficiency and accuracy. As $\varepsilon$ increases its value, \progressivepartgreedy incurs shorter running time and serves less number of passengers, as reported in  Figure~\ref{fig:efficiency:rho} and Figure~\ref{fig:effectiveness:epsilon}, respectively. We choose  $\varepsilon=0.01$ as the default setting.

\begin{figure}[t]
	\centering
	\includegraphics[width=1\textwidth]{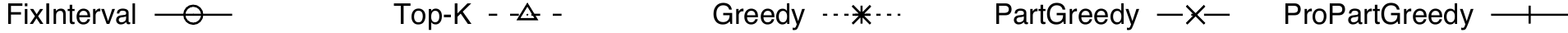}
	\vspace{-0.3cm}
	\hspace{-10pt}
	\subfigure[Running time vs. $\theta$]
	{\label{fig:efficiency:theta}\includegraphics[width=0.345\textwidth]{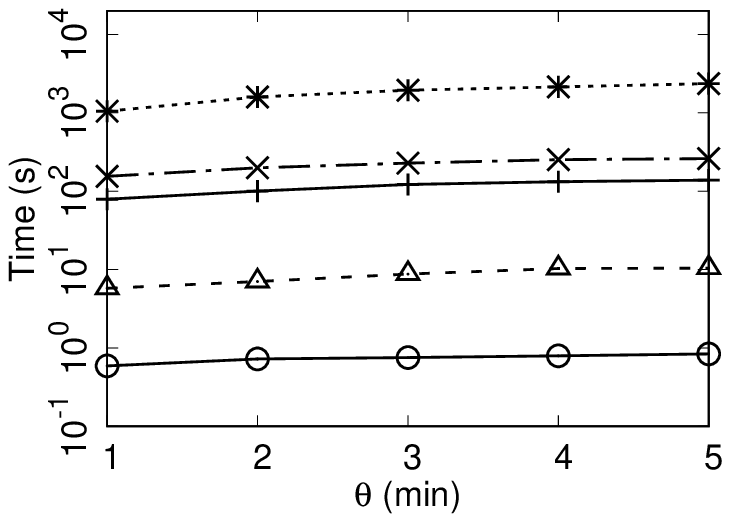}}
	\hspace{-10pt}
	\subfigure[Running time vs. $\rho$]
	{\label{fig:efficiency:rho}\includegraphics[width=0.345\textwidth]{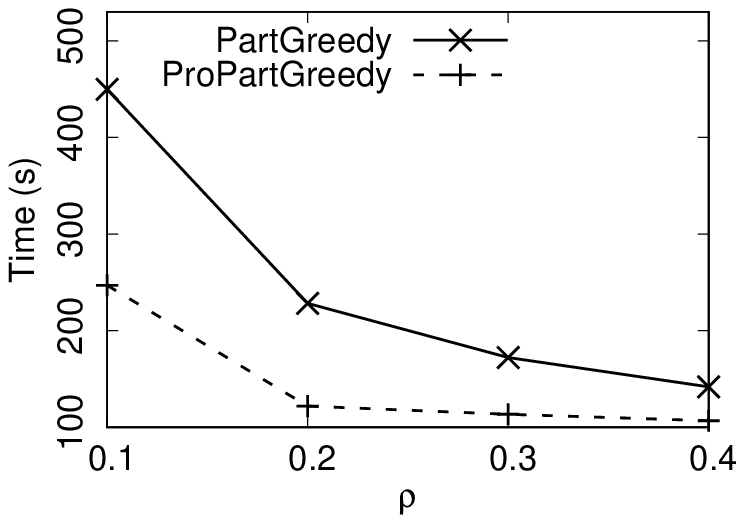}}
	\hspace{-10pt}
	\subfigure[Running time vs.$\varepsilon$]
	{\label{fig:efficiency:epsilon}\includegraphics[width=0.345\textwidth]{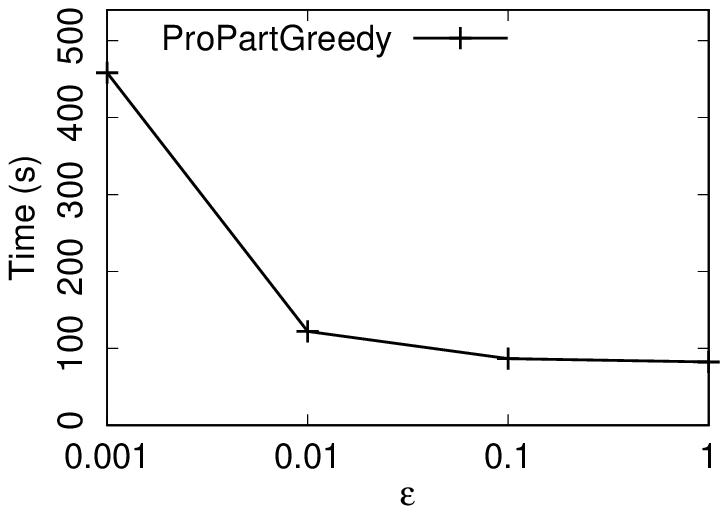}}\\
	\hspace{-9pt}
	\subfigure[SPN vs. $\theta$]
	{\label{fig:effectiveness:theta}\includegraphics[width=0.345\textwidth]{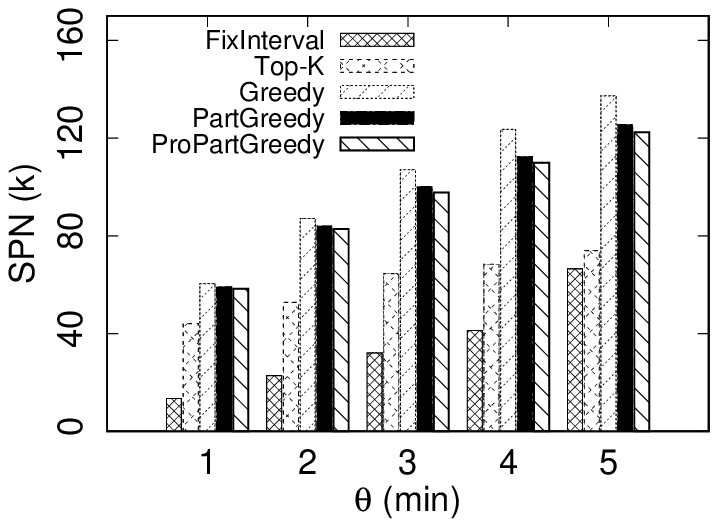}}
	\hspace{-10pt}
	\subfigure[SPN vs. $\rho$]
	{\label{fig:effectiveness:rho}\includegraphics[width=0.345\textwidth]{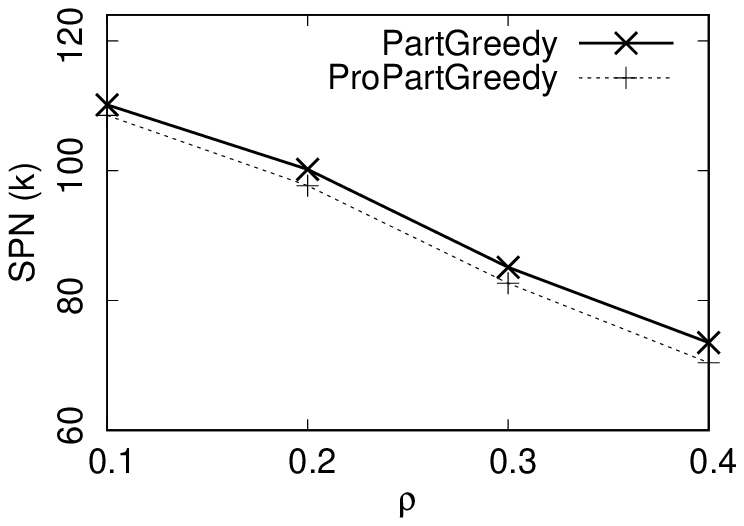}}
	\hspace{-10pt}
	\subfigure[SPN vs. $\varepsilon$]
	{\label{fig:effectiveness:epsilon}\includegraphics[width=0.345\textwidth]{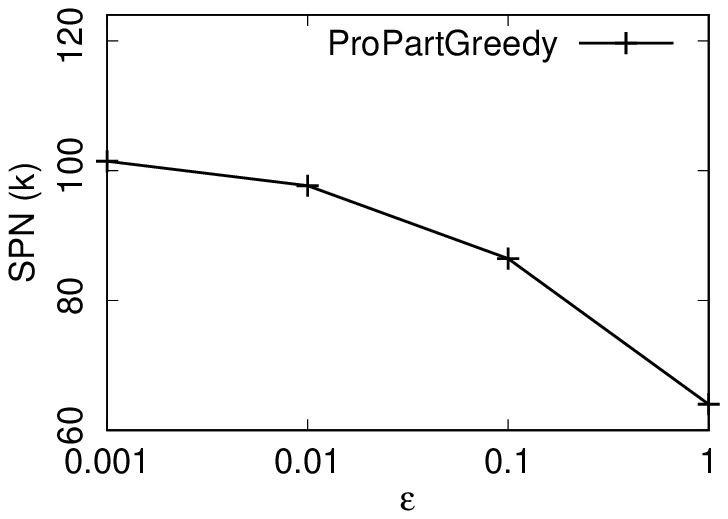}}
	\smallVspace
	\caption{Effect of parameters}
	\label{fig:para}
	\bigVspace
\end{figure}
\begin{figure}[t]
	\centering
	%\vspace{-0.5cm}
	\hspace{-10pt}
	\subfigure[SPN vs. $\Number$]
	{\label{fig:effect:bus}\includegraphics[width=0.345\textwidth]{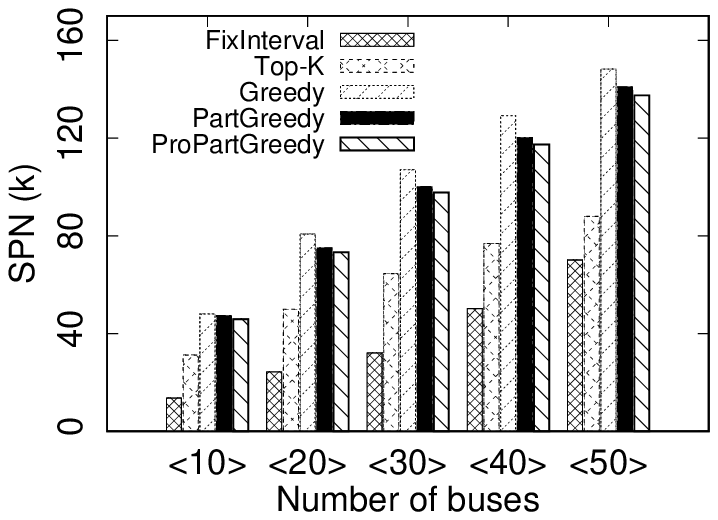}}
	\hspace{10pt}
	\subfigure[SPN vs. $|\Ps|$]
	{\label{fig:effect:passenger}\includegraphics[width=0.345\textwidth]{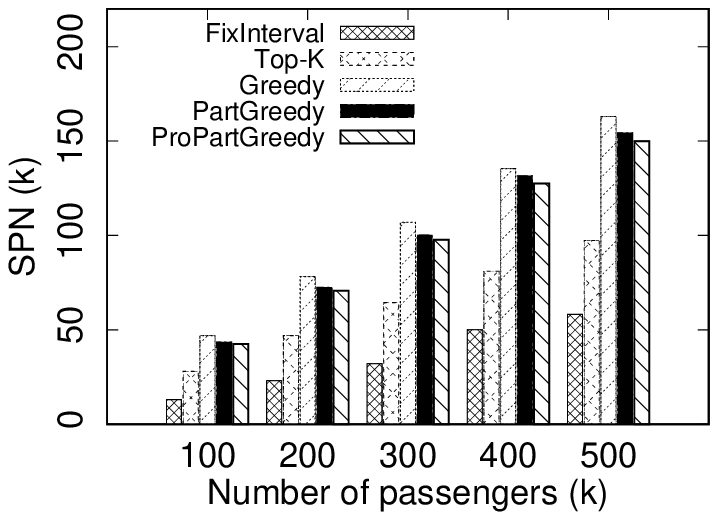}}
	\smallVspace
	\caption{Effectiveness Study: SPN vs. $\Number$ or $|\Ps|$}
	\label{fig:effectiveness}
	\bigVspace
\end{figure}

\noindent
\textbf{Effectiveness Study.} We report the effectiveness of different algorithms in Figure~\ref{fig:effectiveness}. We observe that 
%The effectiveness of all algorithms by varying $\Number$ and $|\Ps|$ is shown in Figure~\ref{fig:effectiveness}, and we find: 
%
(1) \fixint is most \emph{ineffective}; (2) the three algorithms proposed in this work perform much better than the other two, e.g., \progressivepartgreedy doubles (or even triples in some cases) the SPN of \fixint;  and (3) \greedy performs the best while \partitiongreedy and \progressivepartgreedy achieve comparable performance (only up to 9.4\% below that of \greedy).
%
%achieve close results, which are slightly worse than \greedy by up to 9.4\%. %(3)The improvement of \progressivepartgreedy over \fixint is at least 96\%.

%\vspace{-0.7cm}
%\subsection{Efficiency Study}
%\smallVspace

\noindent
\textbf{Efficiency Study.} 
Figure \ref{fig:efficiency} shows the running time of each method w.r.t. varying $\Number$ and $|\Ps|$. We have two main observations. (1) The time gap among \greedy, \partitiongreedy and \progressivepartgreedy becomes more significant with the increase of $\Number$. This could be the increase of $\Number$ causes an increase in the number of clusters and $n_{min}$. On the other hand, \partitiongreedy and \progressivepartgreedy only need to scan one cluster when selecting buses. (2) The improvement of \partitiongreedy and \progressivepartgreedy over \greedy decreases with the increase of $|\Ps|$. This is because the overlap between clusters increases with the increase of $|\Ps|$, which leads to a reduction in the number of clusters and an increase in partition time.

%\vspace{-0.4cm}
%\subsection{Scalability Study}
%\smallVspace
\noindent
\textbf{Scalability Study.}
To evaluate the scalability of our methods, we vary $\Number$ from $\langle 100 \rangle$ to $\langle 500 \rangle$, and $|\Ps|$ from 1 million to 5 million. From Figure~\ref{fig:scalability:bus}, we find that the efficiency of \greedy is more sensitive to $\Number$, as compared to \partitiongreedy and \progressivepartgreedy. It's worth noting that the results are omitted for \greedy when it cannot terminate within $10^4$ seconds. As shown in Figure~\ref{fig:scalability:passenger}, \partitiongreedy and \progressivepartgreedy are about ten times faster than \greedy when $|\Ps|$ is varying. 

\begin{figure}[t]
	\centering
	\vspace{-0.3cm}
	\includegraphics[width=1\textwidth]{title5.pdf}
	\vspace{-0.4cm}
	\hspace{-10pt}
	\subfigure[Running time vs. $\Number$]
	{\label{fig:efficiency:bus}\includegraphics[width=0.345\textwidth]{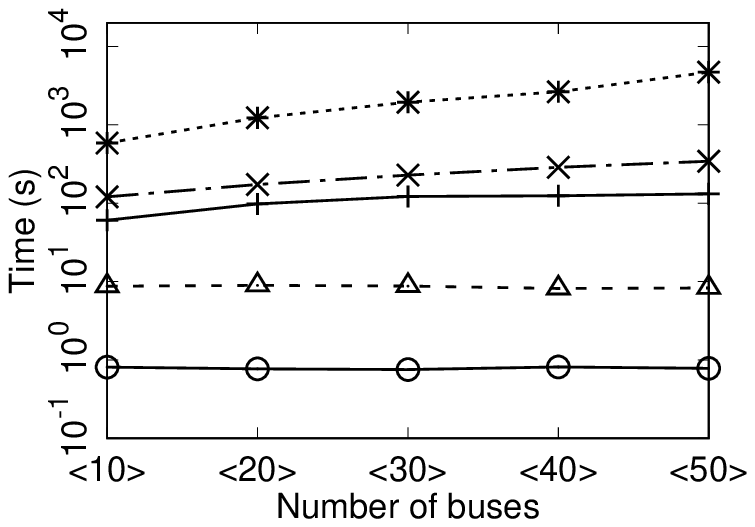}}
	\hspace{10pt}
	\subfigure[Running time vs. $|\Ps|$]
	{\label{fig:efficiency:passenger}\includegraphics[width=0.345\textwidth]{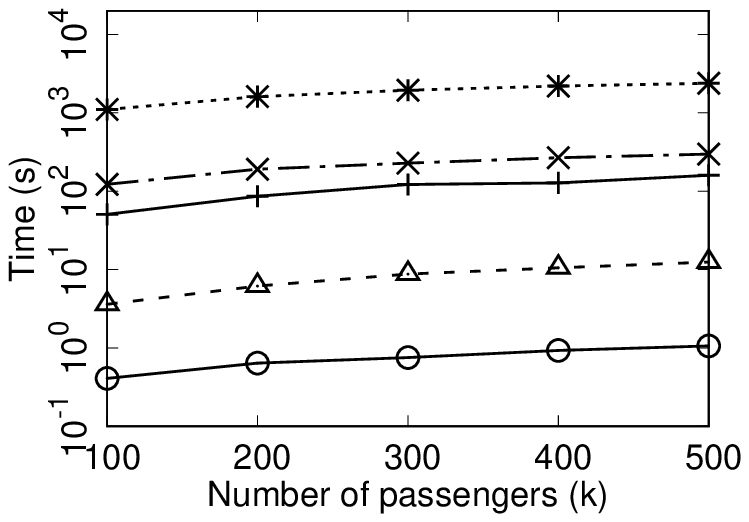}}
	\caption{Efficiency Study: Total Running Time vs. $\Number$ or $|\Ps|$}
	\label{fig:efficiency}
	\smallVspace
\end{figure}
\begin{figure}[t]
	\centering
	\hspace{-10pt}
	\subfigure[Running time vs. $\Number$]
	{\label{fig:scalability:bus}\includegraphics[width=0.345\textwidth]{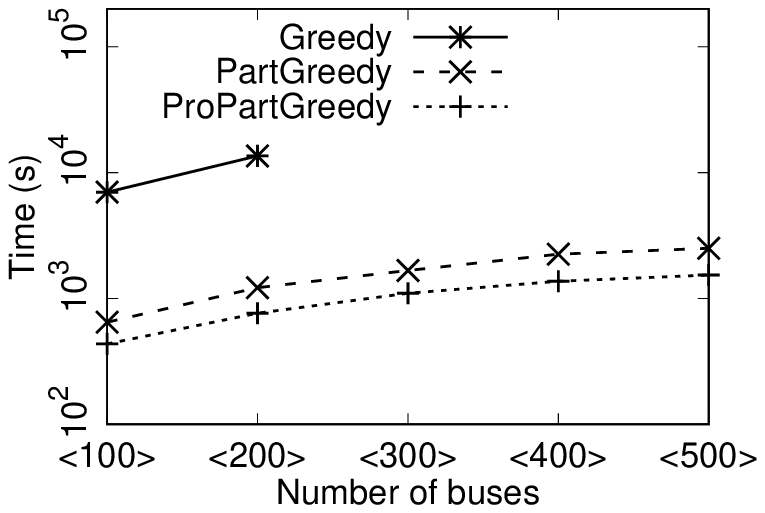}}
	\hspace{10pt}
	\subfigure[Running time vs. $|\Ps|$]
	{\label{fig:scalability:passenger}\includegraphics[width=0.345\textwidth]{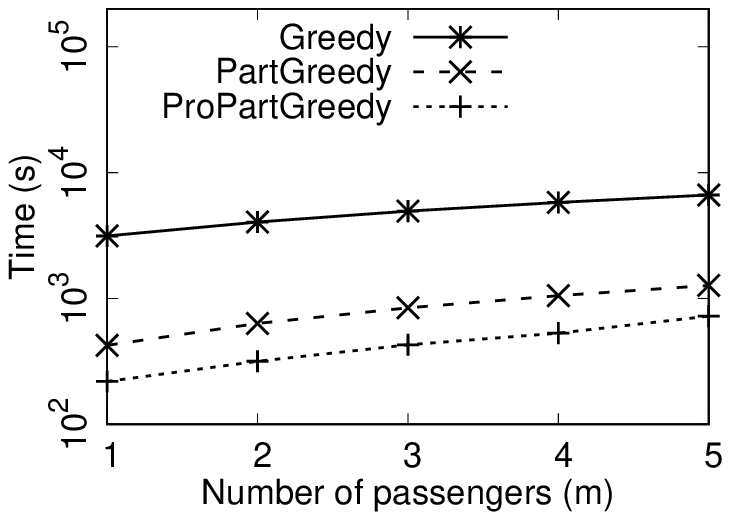}}
	\smallVspace
	\caption{Scalability Study}
	\label{fig:scal}
	\bigVspace
\end{figure}

\smallVspace
\section{Conclusion} \label{sec_8}
\smallVspace
In this paper we studied the bus frequency optimization problem considering user satisfaction for the first time. Our target is to schedule the buses in such a way that the total number of passengers who could receive their bus services within the waiting time threshold is maximized. We showed that this problem is NP-hard, and proposed three approximation algorithms with non-trivial theoretical guarantees. Lastly, we conducted experiments on real-world datasets to verify the efficiency, effectiveness, and scalability of our methods.

\noindent
\textbf{Acknowledgements.} Zhiyong Peng is supported in part by the National Key Research and Development Program of China (Project Number: 2018YFB1003400), Key Project of the National Natural Science Foundation of China (Project Number: U1811263) and the Research Fund from Alibaba Group. Zhifeng Bao is supported in part by ARC DP200102611, DP180102050, NSFC 91646204, and a Google Faculty Award. Baihua Zheng is supported in part by Prime Minister’s Office, Singapore under its International Research Centres in Singapore Funding Initiative.

{\small
\bibliographystyle{splncs04}
\bibliography{ref}
}
\end{document}